\newtheorem{theorem}{Theorem}
\newtheorem{lemma}[theorem]{Lemma}
\newtheorem {example}[theorem]{Example}
\newtheorem{corollary}[theorem]{Corollary}
\newtheorem{note}[theorem]{Note}
\newcommand{\sgraph}{G}
\newcommand{\weight}{w}
\newcommand{\neighbour}{N}
\newcommand{\products}{\mathcal{P}}
\newcommand{\snet}{\mathcal{S}}
\newcommand{\prodset}{P}
\newcommand{\obar}[1]{\overline{#1}}
\newcommand{\srcnodes}{\mathit{source}}
\newcommand{\payoff}{p}
\newcommand{\strprofile}{s}
\newcommand{\agents}{\mathcal{A}}
\newcommand{\spred}{\operatorname{Pred}}
\newcommand{\nat}{\mathbb{N}}
\newcommand{\inflset}{\mathcal{N}}
\newcommand{\constutil}{c_0}
\newcommand{\ES}{\emptyset}
\newcounter{symbol}
\newcommand{\indexsyma}[1]%
{\stepcounter{symbol}\index{zzz1 \thesymbol @\protect#1}}
\newcommand{\indexsymb}[1]%
{\stepcounter{symbol}\index{zzz2 \thesymbol @\protect#1}}
\newcommand{\indexsymc}[1]%
{\stepcounter{symbol}\index{zzz3 \thesymbol @\protect#1}}
\newcommand{\indexsymd}[1]%
{\stepcounter{symbol}\index{zzz4 \thesymbol @\protect#1}}
\newcommand{\indexsyme}[1]%
{\stepcounter{symbol}\index{zzz5 \thesymbol @\protect#1}}
\newcommand{\bfe}[1]{\begin{bfseries}\emph{#1}\end{bfseries}\index{#1}}
\newcommand{\oldbfe}[1]{\begin{bfseries}\emph{#1}\end{bfseries}}
\newcommand{\myra}{\mbox{$\:\rightarrow\:$}}
\newcommand{\sse}{\mbox{$\:\subseteq\:$}}
\newcommand{\fa}{\mbox{$\forall$}}
\newcommand{\te}{\mbox{$\exists$}}
\newcommand{\LLn}{\mbox{$1,\ldots,n$}}
\newcommand{\LL}{\mbox{$\ldots$}}
\newcommand{\C}[1]{\mbox{$\{{#1}\}$}}           
\newcommand{\NI}{\noindent}
\newcommand{\HB}{\hfill{$\Box$}}
\newcommand{\II}{\vspace{2 mm}}
\newcommand{\szkew}[1]{\relax \setbox0=\hbox{\kern -24pt $\displaystyle#1$\kern 0pt }%
\box0}
{\catcode`\@=11 \global\let\ifjusthvtest@=\iffalse}
\newcounter{oldmycaption}
\title{Paradoxes in Social Networks with Multiple Products}
\author{Krzysztof R. Apt \\
\emph{Centre for Mathematics and Computer Science (CWI)} \\
\emph{and University of Amsterdam, The Netherlands} \\
\and
Evangelos Markakis \\
\emph{Athens University of Economics and Business, Athens, Greece} \\
\and
Sunil Simon \\
\emph{Centre for Mathematics and Computer Science (CWI)}
}
\begin{document}

\maketitle

\begin{abstract}
  Recently, we introduced in \cite{AM11} a model for product adoption
  in social networks with multiple products, where the agents,
  influenced by their neighbours, can adopt one out of several
  alternatives.  We identify and analyze here four types of paradoxes
  that can arise in these networks.  To this end, we use social
  network games that we recently introduced in \cite{SA12}.  These
  paradoxes shed light on possible inefficiencies arising when one
  modifies the sets of products available to the agents forming a
  social network.  One of the paradoxes corresponds to the well-known
  Braess paradox in congestion games and shows that by adding more
  choices to a node, the network may end up in a situation that is
  worse for everybody. We exhibit a dual version of this, where
  removing available choices from someone can eventually make
  everybody better off. The other paradoxes that we identify show that
  by adding or removing a product from the choice set of some node may
  lead to permanent instability.  Finally, we also identify conditions
  under which some of these paradoxes cannot arise.
\end{abstract}

\section{Introduction}
\label{sec:intro}


One of the most striking paradoxes in game theory is the Braess paradox.
It states that in some road networks the travel time can actually
increase when new roads are added, e.g., see \cite[pages 464-465]{NRTV07}.  
This paradox can be expressed as a statement about congestion
games, in which the players' objective is to minimize the travel time
from the source to the sink, and the cost (travel time) depends negatively
on the number of users of each road segment. Under this model, the Braess paradox
states that in some congestion games, an addition of a new strategy
(road segment) can trigger a sequence of changes (an improvement
path) that brings the players from the initial Nash equilibrium to a new one 
with worse travel time for each player.

The Braess paradox has received a great deal of attention because of its
counterintuitive character and potential implications. It has been studied
in other contexts too, for example queueing networks, see \cite{CK90}. A 
natural `dual' version of this paradox, concerning the removal
of road segments, has also been studied, see \cite{FKLS12,FKS12}.
This version states
that in some congestion games a removal of a strategy (road segment)
can trigger a sequence of changes (an improvement path) that brings
the players from the initial Nash equilibrium to a new one with better
travel time for each player.

Our main contribution is to demonstrate that paradoxes similar to the Braess
paradox and its dual version exist in a natural class of social
network models concerned with the diffusion of multiple products. These paradoxes provide us
with insights into the possible changes triggered by an addition or
removal of products in the considered social networks.  We also establish that
apart of the above two paradoxes two other types of paradoxes
exist.  In particular, it is possible that an addition of a new
product to (respectively, a removal of a product from) the choice set
of a player results in a permanent instability, in the sense that the
sequence of triggered changes may fail to terminate. Furthermore, 
we analyze variants of these paradoxes that are for example obtained by
stipulating that the corresponding `new situation' is inevitable
instead of only being possible.

Social networks have developed over the years into a large interdisciplinary research area with
important links to sociology, economics, epidemiology, computer
science, and mathematics. Regarding the diffusion of information over social networks, a flurry of numerous articles, notably the
influential \cite{Mor00}, and books, see
\cite{Cha04,Goy07,Veg07,Jac08,EK10}, helped to delineate better this
area. It deals with such diverse topics as epidemics, spread of
certain patterns of social behaviour, effects of advertising, and
emergence of `bubbles' in financial markets.  The model of social networks
that we consider here was introduced in \cite{AM11} and more fully in
\cite{AM13}. In these networks the agents (players), influenced by
their neighbours, can adopt one out of several alternatives.  An
example of such a network is a group of people who choose providers of
mobile phones by taking into account the choice of their friends.

To analyze the dynamics of such networks, we introduced in \cite{SA12},
and more fully in \cite{SA12b}, a natural class of \emph{social network
  games}.  In these strategic games, the payoff of each player weakly
increases when more players choose the same product (strategy) as him
- exactly the opposite of what happens in congestion games. In the same manner
that congestion games allow us to frame the Braess paradox, the games 
under consideration here allow us to formalize and analyze the abovementioned
paradoxes in social networks with multiple products.

The general setup in which we study these paradoxes makes it possible
to interpret them as phenomena that can take place in any community
the members of which make choices by taking into account the choices
of others. An example is a `bubble' in a financial market, where a
decision of a trader to switch to some new financial product triggers
a sequence of transactions, as a result of which all traders involved
become worse off.

Further, it was noticed in a number of empirical studies that an
abundance of choices may sometimes lead to wrong decisions.  To quote
from \cite[page 38]{Gig08}:
\begin{quote}                                                          
  \emph{The freedom-of-choice paradox}. The more options one has, the
  more possibilities for experiencing conflict arise, and the more
  difficult it becomes to compare the options. There is a point where
  more options, products, and choices hurt both seller and consumer.
\end{quote}                                                                                               
Both phenomena can be naturally explained in our framework.

Apart from exhibiting these paradoxes, it is also natural to try to identify classes of social networks in which the
introduced paradoxes cannot arise. The last part of our work (Sections \ref{sec:nosource} and \ref{sec:cycle}) is devoted to such
an analysis.

The paper is organized as follows. In the next section we introduce
the background material.  In Sections~\ref{sec:vul}, \ref{sec:fra},
\ref{sec:red}, and \ref{sec:uns}, using the social network games, we
define formally and analyze four types of paradoxes.  Then, in
Section~\ref{sec:nosource} we consider the case of networks where the
underlying graph has no source nodes and provide sufficient conditions
ensuring that one of the main paradoxes cannot arise. Subsequently, we utilize this result in 
Section~\ref{sec:cycle}, in which we
study the special case where the underlying graph is a simple cycle.
Finally, in Section~\ref{sec:conc} we discuss future research
directions.

\section{Preliminaries}
\label{sec:prelim}

\subsection{Strategic games}

A \bfe{strategic game} for $n > 1$ players, written as $(S_1, \ldots, S_n,
p_1, \ldots, p_n)$, consists of a non-empty set $S_i$ of
\bfe{strategies} and a \bfe{payoff function} $p_i : S_1 \times \cdots
\times S_n \myra \mathbb{R}$,
for each player $i$.

Fix a strategic game
$
G := (S_1, \ldots, S_n, p_1, \ldots, p_n).
$
We denote $S_1 \times \cdots \times S_n$ by $S$, 
call each element $s \in S$
a \bfe{joint strategy},
denote the $i$th element of $s$ by $s_i$, and abbreviate the sequence
$(s_{j})_{j \neq i}$ to $s_{-i}$. Occasionally we write $(s_i,
s_{-i})$ instead of $s$.  

We call a strategy $s_i$ of player $i$ a \bfe{best response} to a
joint strategy $s_{-i}$ of his opponents if $ \fa s'_i \in S_i
\  p_i(s_i, s_{-i}) \geq p_i(s'_i, s_{-i})$. We call a joint strategy
$s$ a \bfe{Nash equilibrium} if each $s_i$ is a best response to
$s_{-i}$.
Further, we call a strategy $s_i'$ of player $i$ a \bfe{better
  response} given a joint strategy $s$ if $p_i(s'_i, s_{-i}) >
p_i(s_i, s_{-i})$.



By a \bfe{profitable deviation} we mean a pair $(s,s')$ of joint
strategies such that $s' = (s'_i, s_{-i})$ for some $s'_i$ and
$p_i(s') > p_i(s)$.  Further, when $s'_i$ is a best response to
$s_{-i}$, we call it a \bfe{best response deviation}. Following
\cite{MS96}, an \bfe{improvement path} (respectively, a \bfe{best
  response improvement path}) is a maximal sequence of profitable
deviations (respectively, best response deviations). Clearly, if a
(best response) improvement path is finite, then its last element is a
Nash equilibrium.

Given two joint strategies $s$ and $s'$ we write

\begin{itemize}
\item $s >_w s'$ if for all $i$, $p_i(s) \geq p_i(s')$ and for some $i$, $p_i(s) > p_i(s')$,

\item $s >_s s'$ if for all $i$, $p_i(s) > p_i(s')$.

\end{itemize}
When $s >_w s'$ (respectively, $s >_s s'$) holds we say that $s'$ is
\bfe{weakly worse} (respectively, \bfe{strictly worse}) than $s$.


\subsection{Social networks}

We are interested in strategic games defined over a specific
type of social networks recently introduced in \cite{AM11}, which we recall first.

Let $V=\{1,\ldots,n\}$ be a finite set of \bfe{agents} and $\sgraph=(V,E,\weight)$ 
a weighted directed graph with $\weight_{ij} \in [0,1]$ being the
weight of the edge $(i,j)$. 
Given a node $i$ of $G$, we denote by
$\neighbour(i)$ the set of nodes from which there is an incoming edge to $i$.
We call each $j \in \neighbour(i)$ a \oldbfe{neighbour} of $i$ in $G$.
We assume that for each node $i$ such that $\neighbour(i) \neq \ES$, $\sum_{j
\in \neighbour(i)} w_{ji} \leq 1$.
An agent $i \in V$ is said to be a
\bfe{source node} in $\sgraph$ if $\neighbour(i)=\emptyset$.

By a \bfe{social network} (from now on, just \bfe{network}) we mean a
tuple $\snet=(\sgraph,\products,\prodset,\theta)$, where 
\begin{itemize}
\item $G$ is a weighted directed graph, 

\item $\products$ is a finite set of alternatives or \bfe{products},

\item $\prodset$ is a function that 
assigns to each agent $i$ a non-empty set of products $\prodset(i)$
from which it can make a choice, 

\item $\theta$ is a \bfe{threshold
  function} that for each $i \in V$ and $t \in \prodset(i)$ yields a
value $\theta(i,t) \in (0,1]$.
\end{itemize}

Given such a network $\snet$, we denote by $\srcnodes(\snet)$ the set of
source nodes in the underlying graph $\sgraph$.  

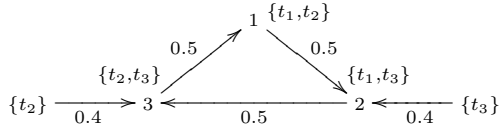
\begin{figure}[ht]
\centering
$
\def\objectstyle{\scriptstyle}
\def\labelstyle{\scriptstyle}
\xymatrix@R=20pt @C=30pt{
& &1 \ar[rd]^{0.5} \ar@{}[rd]^<{\{t_1,t_2\}}\\
\{t_2\} \ar[r]_{0.4} &3 \ar[ur]^{0.5} \ar@{}[ur]^<{\{t_2,t_3\}}& &2 \ar[ll]^{0.5} \ar@{}[lu]_<{\{t_1,t_3\}} &\{t_3\} \ar[l]^{0.4}\\
}$

\caption{\label{fig:socnet}A social network}
\end{figure}

\begin{example}
\label{ex:socnet}
\normalfont Figure \ref{fig:socnet} shows an example of a network. Let
the threshold be $0.3$ for all nodes. The set of products $\products$
is $\{t_1,t_2,t_3\}$, the product set of each agent is marked next to
the node denoting it and the weights are labels on the edges. Each
source node is represented by the unique product in its product set.
\HB
\end{example}

Given two social networks $\snet$ and $\snet'$ we say that $\snet'$ is
an \bfe{expansion} of $\snet$ if it results from adding a product to
the product set of a node in $\snet$.  We say then also that $\snet$
is a \bfe{contraction} of $\snet'$.

\subsection{Social network games}
\label{subsec:sng}

Next, we recall the strategic games over the social networks in the above sense that
we introduced in \cite{SA12}.
Fix a network $\snet=(\sgraph,\products,\prodset,\theta)$.
Each agent can adopt a product from his product set or
choose not to adopt any product. We denote the latter choice by
$t_0$. 

With each network $\snet$ we associate a strategic game
$\mathcal{G}(\snet)$. The idea is that the agents
simultaneously choose a product or abstain from choosing any.
Subsequently, each node assesses his choice by comparing it with the
choices made by his neighbours.  Formally, we define the game as
follows:

\begin{itemize}
\item the players are the agents (i.e., the nodes),

\item the set of strategies for player $i$ is
$S_i :=\prodset(i) \cup \{t_0\}$,

\item For $i \in V$, $t \in
\prodset(i)$ and a joint strategy $\strprofile$, let
$
\inflset_i^t(\strprofile) :=\{j \in \neighbour(i) \mid s_j=t\},
$
i.e., $\inflset_i^t(\strprofile)$ is the set of neighbours of $i$ who adopted in $s$
the product $t$.

The payoff function is defined as follows, where $\constutil$ is some given in advance
positive constant:

\begin{itemize}
\item for $i \in \srcnodes(\snet)$,

$\payoff_i(\strprofile) :=\left \{\begin{array}{ll}
                         0          & \mbox{if~~} \strprofile_i = t_0\\ 
                         \constutil & \mbox{if~~} \strprofile_i \in \prodset(i)\\
  \end{array}
  \right. $\\

\item for $i \not\in \srcnodes(\snet)$,

$\payoff_i(s) :=\left\{\begin{array}{ll}
               0 &\mbox{if~~} \strprofile_i = t_0\\
               \sum\limits_{j \in \inflset_i^t(\strprofile)} w_{ji}-\theta(i,t) & \mbox{if~~} \strprofile_i=t, \mbox{ for some } t \in \prodset(i)\\
\end{array}     
                           \right. $\\

\end{itemize}

\end{itemize}

In the first entry we assume that the payoff function for the source
nodes is constant only for simplicity.  The second entry in the payoff
definition is motivated by the following considerations.  When agent
$i$ is not a source node, his `satisfaction' from a joint strategy
depends positively from the accumulated weight (read: `influence') of
his neighbours who made the same choice as him, and negatively from
his threshold level (read: `resistance') to adopt this product.  The
assumption that $\theta(i,t) > 0$ reflects the view that there is
always some resistance to adopt a product. 
Strategy $t_0$ represents the possibility that an agent refrains from
choosing a product.  

\begin{example}
\label{ex:payoff}
\normalfont Consider the network given in Example~\ref{ex:socnet} and
the joint strategy $\strprofile$ where each source node chooses the unique
product in its product set and nodes 1, 2 and 3 choose $t_2$, $t_3$
and $t_2$ respectively. The payoffs are then given as follows:

\begin{itemize}
\item for the source nodes, the payoff is the fixed constant $\constutil$,
\item $\payoff_1(\strprofile)=0.5-0.3=0.2$,
\item $\payoff_2(\strprofile)=0.4-0.3=0.1$,
\item $\payoff_3(\strprofile)=0.4-0.3=0.1$.
\end{itemize}

Let $\strprofile'$ be the joint strategy in which player 3 chooses $t_3$
and the remaining players make the same choice as given in
$\strprofile$. Then $(\strprofile,\strprofile')$ is a profitable
deviation since $\payoff_3(\strprofile') > \payoff_3(\strprofile)$. In
what follows, we represent each profitable deviation by a node and a strategy it
switches to, e.g., $3:t_3$. Starting at $\strprofile$, the sequence of
profitable deviations $3:t_3, 1:t_0$ is an improvement path which
results in the joint strategy in which nodes 1, 2 and 3 choose $t_0$,
$t_3$ and $t_3$ respectively and each source node chooses the unique
product in its product set.  \HB
\end{example}

By definition, the payoff of each player depends only on the strategies
chosen by his neighbours, so the social network games are related to
graphical games of \cite{KLS01}. However, the underlying dependence
structure of a social network game is a directed graph and the
presence of the special strategy $t_0$ available to each player makes
these games more specific. Finally, note that these games satisfy the \bfe{join the
  crowd} property that we define as follows:

\begin{quote}
Each payoff function $p_i$ depends only on the strategy chosen by player $i$ and
the set of players who also chose his strategy. Moreover, 
the dependence on this set is monotonic.
\end{quote}

The last qualification is exactly opposite to the definition of 
congestion games with player-specific payoff functions of~\cite{Mil96}, 
in which the dependence on the above set is antimonotonic. 
That is, when more players choose the strategy of player $i$, then his payoff weakly decreases.






\section{Vulnerable networks}
\label{sec:vul}

In what follows we introduce and analyze four types of deficient networks.
In this section we focus on the following notions.

We say that a social network $\snet$ is $\te w$-\bfe{vulnerable} if
for some Nash equilibrium $s$ in $\mathcal{G}(\snet)$, an expansion
$\snet'$ of $\snet$ exists such that some improvement path in
$\mathcal{G}(\snet')$ leads from $s$ to a Nash equilibrium $s'$ in
$\mathcal{G}(\snet')$ such that $s >_w s'$.  In general we have four
notions of vulnerability, that correspond to the combinations $X Y$,
where $X \in \{\te, \fa\}$ and $Y \in \{w,s\}$. For example, we say
that $\snet$ is $\fa s$-\bfe{vulnerable} if for some Nash equilibrium
$s$ in $\mathcal{G}(\snet)$, an expansion $\snet'$ of $\snet$ exists
such that each improvement path in $\mathcal{G}(\snet')$ leads from
$s$ to a Nash equilibrium $s'$ in $\mathcal{G}(\snet)$ such that $s
>_s s'$.

First note that there are some obvious implications between the four
notions of vulnerability and inefficiency that we exhibit in
Figure~\ref{fig:vul}.

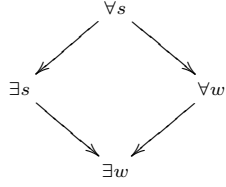
\begin{figure}[ht]
\centering
$
\def\objectstyle{\scriptstyle}
\def\labelstyle{\scriptstyle}
\xymatrix@W=10pt @R=20pt @C=20pt{
& \forall s \ar[ld]^{} \ar[rd]^{}\\
\exists s \ar[rd]^{}& & \forall w \ar[ld]^{}\\
& \exists w\\
}$
\caption{\label{fig:vul}Dependencies between the notions of vulnerability}
\end{figure}

We show now that these implications are the
only ones that hold between these four notions.



\begin{example}[$\fa w$] \label{exa:faw}
\label{ex:awvulnerable}
\rm
In Figure~\ref{fig:faw} we exhibit an example of a $\fa w$-vulnerable
network that is not $\te s$-vulnerable. The product set of each node
is marked next to it and the weights are labels on the edges. We
assume that each threshold is a constant $\theta$, where $0 < \theta <
0.1$.  Here and elsewhere the relevant expansion is depicted by means
of a product and the dotted arrow pointing to the relevant node. In this case
product $t_1$ is added to node 4.

The initial Nash equilibrium $s$ is the joint strategy formed by the
underlined products, i.e., $(t_2, t_3, t_3, t_3, t_1, t_1)$.  Consider
now what happens after product $t_1$ is added to the product set of
node 4.  Then $s$ ceases to be a Nash equilibrium.  Addition of $t_1$
triggers the unique best response improvement path
\[
4:t_1, 3:t_2, 5:t_2, 6:t_0, 4:t_3, 3:t_3, 5:t_0 
\]
resulting in the Nash equilibrium $(t_2,t_3,t_3,t_3,t_0,t_0)$. Note
that at each step of any improvement path starting in $\strprofile$
triggered by the addition of product $t_1$ to node 4 there is a unique
node which is not playing its best response. For instance, in the
second step of the above improvement path, node 3 is the unique node
which is not playing its best response. Although node 3 can profitably
deviate to $t_0$ instead of $t_2$, in the next step which is unique,
node 3 is forced to play its best response. Therefore it suffices to
consider the outcome of the above best response improvement path. In
the Nash equilibrium $(t_2,t_3,t_3,t_3,t_0,t_0)$ the payoffs of
players 1--4 did not change with respect to the original Nash
equilibrium, while the payoffs of players 5 and 6 decreased.

Finally, notice that a network is
not $\te s$-vulnerable if the underlying graph has a source node.
\HB










\begin{figure}[ht]
\centering
$
\def\objectstyle{\scriptstyle}
\def\labelstyle{\scriptstyle}
\xymatrix@W=10pt @R=40pt @C=30pt{
1 \ar[d]_{0.1} \ar@{}[d]_<{\{\underline{t_2}\}}&  &2 \ar[d]^{0.1} \ar@{}[d]^<{\{\underline{t_3}\}}\\
3 \ar@{}[d]_<{\{t_2,\underline{t_3}\}} \ar[d]_{0.3} \ar@/^0.7pc/[rr]^{0.2}& &4  \ar@/^0.7pc/[ll]^{0.2} \ar@{}[d]^<{\{\underline{t_3}\}}& t_1 \ar@{..>}[l] \\
5 \ar@{}[u]^<{\{\underline{t_1},t_2\}} \ar@/^0.7pc/[rr]^{0.2}& &6 \ar@/^0.7pc/[ll]^{0.2} \ar@{}[u]_<{\{\underline{t_1}\}} \ar[u]_{0.3} \\
}$
\caption{\label{fig:faw}An $\fa w$-vulnerable network}
\end{figure}
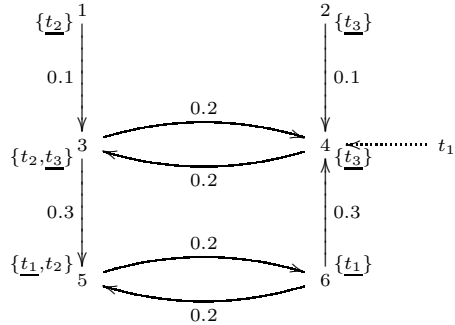

\end{example}

In this specific example the payoff of the player who triggered the
change in the end did not change.  A slightly more complicated
example, that we omit, shows that the
initiator's payoff in the final Nash equilibrium can decrease. Also, one
can construct examples in which the payoffs in the final Nash
equilibrium decrease for an arbitrary large fraction of the players
and remain constant for the other players.

\begin{example}[$\te s$]

\rm
In Figure~\ref{fig:tes} we exhibit an example of a $\te s$-vulnerable
network that is not $\fa w$-vulnerable (and hence not $\fa s$-vulnerable).  
As before we assume that each threshold is a constant
$\theta$, where $0 < \theta < 0.1$ and we underline the strategies
that form the initial Nash equilibrium.

\begin{figure}[ht]
\centering
$
\def\objectstyle{\scriptstyle}
\def\labelstyle{\scriptstyle}
\xymatrix@W=10pt @R=40pt @C=30pt{
1 \ar[d]_{0.1} \ar@{}[d]_<{\{\underline{t_3}\}} \ar@/^0.7pc/[rr]^{0.1}&  &2 \ar@/^0.7pc/[ll]^{0.1} \ar@{}[d]^<{\{t_2,\underline{t_3}\}}\\
3 \ar@{}[d]_<{\{\underline{t_1},t_3\}} \ar[d]_{0.2} \ar@/^0.7pc/[rr]^{0.2}& &4 \ar[u]^{0.2} \ar@/^0.7pc/[ll]^{0.2} \ar@{}[d]^<{\{\underline{t_1}\}}& t_2 \ar@{..>}[l] \\
5 \ar@{}[u]^<{\{\underline{t_2},t_3\}} \ar@/^0.7pc/[rr]^{0.1}& &6 \ar@/^0.7pc/[ll]^{0.1} \ar@{}[u]_<{\{\underline{t_2}\}} \ar[u]_{0.3} \\
}$
\caption{\label{fig:tes}A $\te s$-vulnerable network}
\end{figure}

To see that this network is $\te s$-vulnerable it suffices to note
that starting from the Nash equilibrium $(t_3, t_3, t_1, t_1, t_2,
t_2)$ of the initial network
the addition of product $t_2$ to node 4 triggers the best response improvement path
\[
4:t_2, 3:t_3, 5:t_3, 6:t_0, 2:t_2, 1:t_0, 4:t_0, 2:t_0, 3:t_0, 5:t_0
\]
that ends in a Nash equilibrium in which each strategy equals $t_0$,
and consequently
each payoff becomes 0.

To see that this network is not $\fa w$-vulnerable first note that
the addition of product $t_2$ to node 4 also triggers the improvement path
\[
4:t_2, 2:t_2, 1:t_0, 3:t_0
\]
that ends in a Nash equilibrium in which the payoffs of nodes 2 and 4 increase.

Each of the remaining initial Nash equilibria includes a strategy
$t_0$. But when a joint
strategy includes $t_0$, then for no $s'$ we have $s >_s s'$. This
allows us to conclude that
the considered network is not $\fa s$-vulnerable.
To show that the considered network is in fact not $\fa w$-vulnerable
we need to analyze
each of the initial Nash equilibria and consider all profitable additions.
We only consider one representative example. Consider
the initial Nash equilibrium $(t_0, t_0, t_1, t_1, t_2, t_2)$ and the
addition of product
$t_1$ to node 2. This triggers the unique improvement path $2:t_1,
1:t_0$ that ends in
a Nash equilibrium in which the payoff of node 2 increased.
\HB
\end{example}

If we just wish to construct an improvement path, so not necessarily a
best response improvement path, that yields a strictly worse Nash
equilibrium, then a simpler example can be used. Namely, one can drop
in the above network the nodes 1 and 2 and all the arcs to and from
them, and adjust the threshold function of node 3 so that $\theta(3,
t_3) < \theta(3, t_1)$.
Then 
\[
4:t_2, 3:t_3, 5:t_3, 6:t_0, 4:t_0, 3:t_0, 5:t_0
\]
is the desired improvement path.

\begin{example}[$\te w$]\label{exa:tew}
  
\rm

Next, we provide an example of a $\te w$-vulnerable network that is
neither $\te s$-vulnerable nor $\fa w$-vulnerable.  It suffices to add
to the network given in Figure \ref{fig:tes} a source node $7$ with
the product set $\{t_1\}$ and connect it to node $1$ using an
arbitrary threshold and weight. In each Nash equilibrium, node $7$
chooses $t_1$, so its payoff is the same. Further, the choice of this
node has no influence on the choices of other nodes in the Nash
equilibria in the original and the extended networks.  So the
conclusion follows from the previous example.  
\HB
\end{example}

Next, we would like to mention the following intriguing question:
\II

\NI
\textbf{Open problem:}
Do $\fa s$-vulnerable networks exist?
\II

The following result
shows that if they do, they use at least three products.

\begin{theorem}
When there are only two products $\fa w$-vulnerable networks, so a
fortiori $\fa s$-vulnerable networks, do not exist.
\end{theorem}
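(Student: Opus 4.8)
The plan is to argue by contradiction. Assume some network $\snet$ with $\products = \{t_1,t_2\}$ is $\fa w$-vulnerable, witnessed by a Nash equilibrium $s$ of $\difgraph(\snet)$ and an expansion $\snet'$. Since an expansion adds a product not already available at some node and there are only two products, $\snet'$ is obtained from $\snet$, without loss of generality, by adding $t_2$ to a node $k$ whose product set in $\snet$ is $\{t_1\}$.

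First I would dispose of easy cases and extract the obligatory first move. If $k$ is a source node, then $t_2$ yields $k$ the same payoff $\constutil$ as $t_1$, so $s$ remains a Nash equilibrium of $\difgraph(\snet')$; then the trivial improvement path $\langle s\rangle$ already contradicts $\fa w$-vulnerability, since it would require $s >_w s$. So $k$ is not a source node. Since for every player other than $k$ both the strategy set and the payoff function are unchanged and $s$ is a Nash equilibrium of $\difgraph(\snet)$, the only candidate profitable deviation at $s$ in $\difgraph(\snet')$ is $k$ switching to $t_2$; and since $s$ cannot stay a Nash equilibrium of $\difgraph(\snet')$ (same trivial-path argument), we get $\payoff_k(t_2,s_{-k}) > \payoff_k(s)$. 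As $\payoff_k(s) = \max(\payoff_k(t_1,s_{-k}),0)$, in fact $t_2$ is $k$'s strictly best response at $s$, so $s \move s^1$ with $s^1 := (t_2,s_{-k})$ is a best response deviation and hence the first step of every nontrivial improvement path from $s$.

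The core of the argument is a monotonicity invariant along a best response improvement path extending $s,s^1$. Write $A$ and $B$ for the sets of agents playing $t_1$ and $t_2$ in $s$. I claim that along such a path the set of agents playing $t_1$ only shrinks (staying inside $A$) and the set playing $t_2$ only grows (always containing $B\cup\{k\}$). The proof is by induction on the path length, and the two points to check are: (i) an agent currently playing $t_2$ never profitably deviates, because its accumulated $t_2$-weight has only increased and its $t_1$-weight only decreased since it entered $t_2$ --- either since $s$, or since the best response step that put it there, at which moment $t_2$ already beat both $t_1$ and $t_0$; hence nobody leaves $t_2$, and in particular $k$ stays on $t_2$ forever; (ii) no best response is ever a switch into $t_1$: an agent on $t_0$ that was on $t_1$ in $s$ had left $t_1$ only because $0$ beat its $t_1$-payoff, which has only dropped further since, while an agent that was on $t_0$ in $s$ had non-positive $t_1$-payoff already, which again only drops; so nobody joins $t_1$. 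I expect point (i) to be the delicate one, and it is also the reason one must commit to a best response improvement path rather than an arbitrary one: on a general improvement path an agent could make a profitable but non-best switch into $t_2$ that leaves it with negative payoff and then be driven out of $t_2$ later, destroying monotonicity.

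Given the invariant the finish is short. The invariant confines every agent to a trajectory that is either empty or a segment of the chain $t_1 \to t_0 \to t_2$, so each agent moves at most twice and the path has length at most $2n$; being a maximal best response improvement path, it therefore ends in a Nash equilibrium $s'$ of $\difgraph(\snet')$. By the invariant $k$ plays $t_2$ in $s'$ and the set of $t_2$-players in $s'$ contains that of $s^1$, so
\[
\payoff_k(s') = \sum_{j\in\inflset_k^{t_2}(s')}\weight_{jk} - \theta(k,t_2) \ \geq\ \sum_{j\in\inflset_k^{t_2}(s^1)}\weight_{jk} - \theta(k,t_2) = \payoff_k(s^1) > \payoff_k(s).
\]
Hence $s >_w s'$ fails, contradicting the choice of $s$ and $\snet'$. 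This shows no two-product network is $\fa w$-vulnerable, and since $\fa s$-vulnerability implies $\fa w$-vulnerability (Figure~\ref{fig:vul}), none is $\fa s$-vulnerable either.
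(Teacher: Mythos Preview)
Your proof is correct and rests on the same core idea as the paper's: exhibit a single improvement path in $\mathcal{G}(\snet')$ starting at $s$ along which the triggering node $k$'s payoff strictly increases, contradicting $s>_w s'$. Both arguments exploit the two-product structure together with the join-the-crowd property to obtain a monotonicity: the set of agents on the newly added product can only grow, the set on the other product can only shrink.

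The execution differs. The paper constructs a specific path by alternating a $t_1$-phase (maximal sequence of best response deviations to the new product) with a $t_0$-phase, argues that the new-product set weakly grows across phases so the process terminates, and then verifies by a separate case analysis that the terminal profile is a Nash equilibrium. You instead establish your monotonicity invariant for \emph{every} best response improvement path extending $s\move s^1$, which immediately bounds each agent's trajectory by the chain $t_1\to t_0\to t_2$ and hence the path length by $2n$; termination and the Nash-equilibrium property of the endpoint then come for free from maximality. Your packaging is somewhat cleaner and yields an explicit length bound, at the cost of having to commit to best response (rather than arbitrary profitable) deviations---a point you correctly flag as essential for (i). Both routes reach the same contradiction via $\payoff_k(s')\geq \payoff_k(s^1)>\payoff_k(s)$.
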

\begin{proof}

Suppose by contradiction that such a network $\snet$ exists. So a Nash equilibrium $s$
in $\mathcal{G}(\snet)$, a node, say 1, and a product, say $t_1$,
exists such that for the network expansion $\snet'$ obtained by adding
$t_1$ to the product set of node 1 each improvement path that starts
in $s$ ends up in a Nash equilibrium $s'$ in $\mathcal{G}(\snet')$
such that $s >_w s'$.  

Given an initial joint strategy we call a maximal sequence of best
response deviations to a strategy $t$ (in an arbitrary order) a
\bfe{$t$-phase}.  We now repeatedly perform, starting at $s$, the
$t_1$-phase followed by the $t_0$-phase. We claim that this process
terminates and hence yields a finite improvement path in
$\mathcal{G}(\snet')$.

First note that if a joint strategy $s^2$ is obtained from $s^1$ by having
some nodes to switch to product $t_1$ and $t_1$ is a best response for
a node $i$ to $s^1_{-i}$, then $t_1$ is also a best response for $i$ to
$s^2_{-i}$. Indeed, by the join the crowd property 
$p_{i}(t_1, s^2_{-i}) \geq p_{i}(t_1, s^1_{-i})$ and 
$p_{i}(t_2, s^1_{-i}) \geq p_{i}(t_2, s^2_{-i})$, so 
$p_{i}(t_1, s^2_{-i}) \geq p_{i}(t_2, s^2_{-i})$ since 
$p_{i}(t_1, s^1_{-i}) \geq p_{i}(t_2, s^1_{-i})$. Further, 
$p_{i}(t_1, s^1_{-i}) \geq p_{i}(t_0, s^1_{-i})$, so also 
$p_{i}(t_1, s^2_{-i}) \geq p_{i}(t_0, s^2_{-i})$.
Consequently after the first $t_1$-phase each node that has the
strategy $t_1$ plays a best response. Call the outcome of the
first $t_1$-phase $s''$.

Consider now a node $i$ that deviated in $s''$ to $t_0$ by means of a
best response. By the observation just made node $i$ deviated from
product $t_2$. So, again by the join the crowd property, this
deviation does not affect the property that the nodes that selected
$t_1$ in $s''$ play a best response.  Iterating this reasoning we
conclude that after the first $t_0$-phase each node that has the
strategy $t_1$ continues to play a best response.

By the same reasoning subsequent $t_1$ and
$t_0$-phases have the same effect on the set of nodes that have the strategy
$t_1$: each of these nodes continues to play a best response. 

Moreover, this set continues to weakly increase.  Consequently these
repeated applications of the $t_1$-phase followed by the $t_0$-phase
terminate, say in a joint strategy $s'$.  Suppose now a node $i$ does
not play a best response to $s'_{-i}$.  If $s'_i = t_0$, then by the
construction $t_1$ is not a best response, so $t_2$ is a best
response.

Suppose the initial strategy of node $i$ was also $t_0$, i.e., $s_i =
t_0$.  Since $s$ is a Nash equilibrium in $\mathcal{G}(\snet)$, we
have $p_i(t_2, s_{-i}) \leq p_i(t_0, s_{-i})$.  By the join the crowd
property $p_i(t_2, s'_{-i}) \leq p_i(t_2, s_{-i})$, so $p_i(t_2,
s'_{-i}) \leq p_i(t_0, s'_{-i})$, which yields a contradiction.  Hence
node $i$ deviated to $t_0$ from some intermediate joint strategy $s^1$
by selecting a best response.  So $p_i(t_2, s^1_{-i}) \leq p_i(t_0,
s^1_{-i})$. Moreover, by the join the crowd property $p_i(t_2,
s'_{-i}) \leq p_i(t_2, s^1_{-i})$, so $p_i(t_2, s'_{-i}) \leq p_i(t_0,
s'_{-i})$, which yields a contradiction, as well.

Further, by the construction $s'_i \neq t_1$, so the only alternative is that
$s'_i = t_2$. But then either $t_0$ or $t_1$ is a best response, which
contradicts the construction of $s'$.
We conclude that $s'$ is a Nash equilibrium in $\mathcal{G}(\snet')$.

Next, the payoff of node 1 strictly increased when it switched to
$t_1$ and, on the account of the above arguments, during the remaining steps
of the considered improvement path it either increased or remained the
same.  We conclude that the final Nash equilibrium $s'$ is not weakly
worse than the original, which yields a contradiction.
\end{proof}

\section{Fragile networks}
\label{sec:fra}

Related notions to vulnerable networks are the following ones.

We say that a social network $\snet$ is $\te$-\bfe{fragile} if for
some Nash equilibrium $s$ in $\mathcal{G}(\snet)$, an expansion
$\snet'$ of $\snet$ exists such that some improvement path in
$\mathcal{G}(\snet')$ that starts in $s$ is infinite.  In turn, we say
that a social network $\snet$ is $\fa$-\bfe{fragile} if for some Nash
equilibrium $s$ in $\mathcal{G}(\snet)$, an expansion $\snet'$ of
$\snet$ exists such that each improvement path in
$\mathcal{G}(\snet')$ that starts in $s$ is infinite.  Finally, we say
that a social network $\snet$ is \bfe{fragile} if $\mathcal{G}(\snet)$
has a Nash equilibrium, while for some expansion $\snet'$ of $\snet$,
$\mathcal{G}(\snet')$ does not.

Obviously each fragile network is $\fa$-fragile, while each $\fa$-fragile network
is $\te$-fragile. We now show that these two implications are proper.

\begin{example}[Fragile]
\label{ex:fragile}
\rm 
Consider the network $\snet$ given in Figure \ref{fig:fragile} where
the source nodes are represented by the unique product in their
product set. We assume that each threshold is a constant $\theta$ such
that $\theta < w_1 < w_2$.

\begin{figure}[ht]
\centering
$
\def\objectstyle{\scriptstyle}
\def\labelstyle{\scriptstyle}
\xymatrix@R=22pt @C=30pt{
& & \{t_1\} \ar[d]^{w_1}\\
& &1 \ar[rd]^{w_2} \ar@{}[u]^<{\{t_1\}} & t_2 \ar@{..>}[l]\\
\{t_2\} \ar[r]_{w_1} &3 \ar[ur]^{w_2} \ar@{}[ur]^<{\{t_2,t_3\}}& &2 \ar[ll]^{w_2} \ar@{}[lu]_<{\{t_1,t_3\}} &\{t_3\} \ar[l]^{w_1}\\
}$

\caption{\label{fig:fragile}A fragile network}
\end{figure}
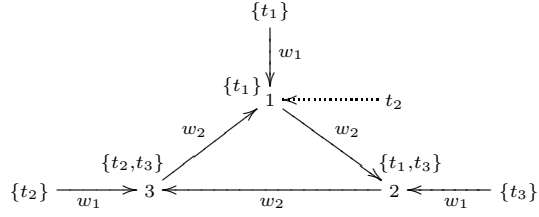
Consider the joint strategy $\strprofile$, in which the nodes marked
by $\{t_1\}$, $\{t_2\}$ and $\{t_3\}$ choose the unique product in
their product set and nodes 1, 2, and 3 choose $t_1, t_1$ and $t_2$,
respectively. For convenience, we denote $\strprofile$ by the choices
of nodes 1, 2 and 3, so $\strprofile=(t_1,t_1,t_2)$.  It is easy to
verify that $\strprofile$ is a Nash equilibrium in
$\mathcal{G}(\snet)$. 

Consider now the expansion $\snet'$ of $\snet$ in
which product $t_2$ is added to the product set of node 1. In
$\mathcal{G}(\snet')$ the joint strategy $s$ ceases to remain a Nash
equilibrium. In fact, no joint strategy is a Nash equilibrium in
$\mathcal{G}(\snet')$. Each agent residing on the triangle can secure a payoff of
at least $w_1 - \theta>0$, so it suffices to analyze the joint
strategies in which $t_0$ is not used. There are in total eight such
joint strategies. Here is their listing, where in each joint strategy
we underline the strategy that is not a best response to the choice of
other players: $(\underline{t_1}, t_1, t_2)$, $(t_1, t_1,
\underline{t_3})$, $(t_1, t_3, \underline{t_2})$, $(t_1,
\underline{t_3}, t_3)$, $(t_2, \underline{t_1}, t_2)$, $(t_2,
\underline{t_1}, t_3)$, $(t_2, t_3, \underline{t_2})$,
$(\underline{t_2}, t_3, t_3)$. This shows that the initial network
$\snet$ is fragile.  
\HB
\end{example}

\begin{example}[$\forall$-fragile]
\label{ex:afragile}
\rm
Consider the network $\snet$ given in Figure \ref{fig:afragile}.
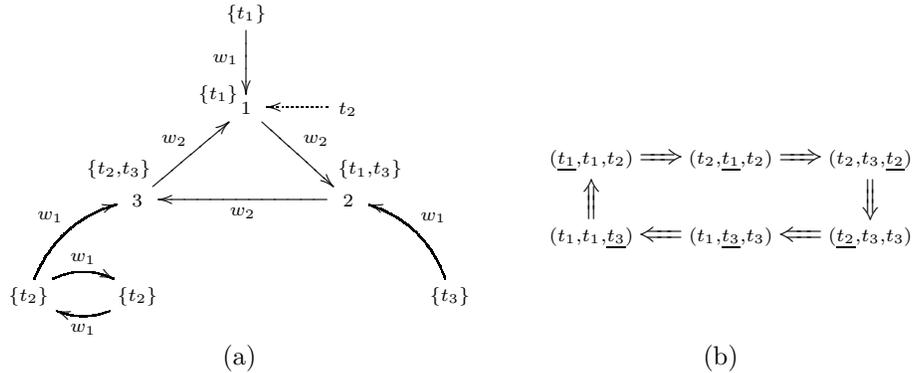
\begin{figure}[ht]
\centering
\begin{tabular}{ccc}
$
\def\objectstyle{\scriptstyle}
\def\labelstyle{\scriptstyle}
\xymatrix@W=10pt @C=20pt{
& & \{t_1\} \ar[d]_{w_1} & \\
& &1 \ar[rd]^{w_2} \ar@{}[u]^<{\{t_1\}} & t_2 \ar@{..>}[l]\\
 &3 \ar[ur]^{w_2} \ar@{}[ur]^<{\{t_2,t_3\}}& &2 \ar[ll]^{w_2} \ar@{}[lu]_<{\{t_1,t_3\}} \\
\{t_2\} \ar@/^0.7pc/[ru]^{w_1} \ar@/^0.7pc/[r]^{w_1} & \{t_2\} \ar@/^0.7pc/[l]^{w_1}& & &\{t_3\} \ar@/_0.7pc/[lu]_{w_1}
}$
&
&
\raisebox{-2.5cm}{\parbox{8cm}{
$
\def\objectstyle{\scriptstyle}
\def\labelstyle{\scriptstyle}
\xymatrix@W=10pt @C=15pt @R=15pt{
(\underline{t_1},t_1,t_2)\ar@{=>}[r]& (t_2,\underline{t_1},t_2)\ar@{=>}[r]& (t_2,t_3,\underline{t_2})\ar@{=>}[d]\\
(t_1,t_1,\underline{t_3})\ar@{=>}[u]& (t_1,\underline{t_3},t_3)\ar@{=>}[l]& (\underline{t_2},t_3,t_3)\ar@{=>}[l]\\
}$
}}
\\
(a) & & \parbox{3.7cm}{(b)}\\
\end{tabular}
\caption{\label{fig:afragile}A $\forall$-fragile network and an
  infinite improvement path}
\end{figure}
We assume that each threshold is a constant $\theta$, where $\theta <
w_1 < w_2$. Consider the joint strategy $\strprofile$, in which the
nodes marked by $\{t_1\}$, $\{t_2\}$ and $\{t_3\}$ choose the unique
product in their product set and nodes 1, 2, and 3 choose $t_1, t_1$
and $t_2$, respectively. As in the previous example we denote
$\strprofile$ by $(t_1,t_1,t_2)$. It is easy to check that
$\strprofile$ is a Nash equilibrium in $\mathcal{G}(\snet)$. 

Now consider the expansion $\snet'$ obtained by adding the product
$t_2$ to the product set of node 1. The joint strategy $\strprofile$
ceases to remain a Nash equilibrium in $\mathcal{G}(\snet')$. In fact,
Figure \ref{fig:afragile}(b) shows the unique best response
improvement path starting in $\strprofile$ which is infinite. For each
joint strategy in the figure, we underline the strategy that is not a
best response. As in the case of Example \ref{ex:awvulnerable} at
every step of every improvement path starting in $\strprofile$, there
is a unique node which is not playing its best response. Therefore it
suffices to consider the above best response improvement path. This
shows that $\snet$ is $\forall$-fragile. 

Also note that the game $\mathcal{G}(\snet')$ has a Nash equilibrium.
The joint strategy in which the nodes marked $\{t_2\}$ along with node
3 choose the product $t_0$, the node marked $\{t_3\}$ choose the
product $t_3$ and the node marked $\{t_1\}$ along with nodes 1 and 2
choose $t_1$ forms a Nash equilibrium. It follows that $\snet$ is not
fragile.
\HB
\end{example}

\begin{example}[$\exists$-fragile]
\label{ex:efragile}
\rm
Consider the network $\snet$ given in Figure \ref{fig:efragile}(a). 
\begin{figure}[ht]
\centering
\begin{tabular}{ccc}
$
\def\objectstyle{\scriptstyle}
\def\labelstyle{\scriptstyle}
\xymatrix@W=10pt @C=20pt{
& \{t_4\} \ar[dr]_{w_3} &\{t_1\} \ar[d]^{w_1}\\
& &1 \ar[rd]^{w_2} \ar@{}[rd]^<{\{t_1,t_2,t_4\}}\\
 &3 \ar[ur]^{w_2} \ar@{}[ur]^<{\{t_2,t_3,t_4\}}& &2 \ar[ll]^{w_2} \ar@{}[lu]_<{\{t_3,t_4\}} &\\
\{t_2\} \ar@/^0.7pc/[ru]^{w_1}& & & t_1 \ar@{..>}[u] &\{t_3\} \ar@/_0.7pc/[lu]_{w_1}\\
}$
&
&
\raisebox{-2.5cm}{\parbox{8cm}{
$
\def\objectstyle{\scriptstyle}
\def\labelstyle{\scriptstyle}
\xymatrix@W=10pt @C=15pt @R=15pt{
(t_1,t_1,\underline{t_3})\ar@{=>}[r]& (\underline{t_1}, t_1, t_2)\ar@{=>}[r]& (t_2,\underline{t_1},t_2)\ar@{=>}[d]\\
(t_1,\underline{t_3},t_3)\ar@{=>}[u]& (\underline{t_2},t_3,t_3)\ar@{=>}[l]& (t_2,t_3,\underline{t_2})\ar@{=>}[l]\\
}$
}}
\\
(a) & & \parbox{3.7cm}{(b)}\\
\end{tabular}
\caption{\label{fig:efragile} A $\exists$-fragile network and an
  infinite improvement path}
\end{figure}
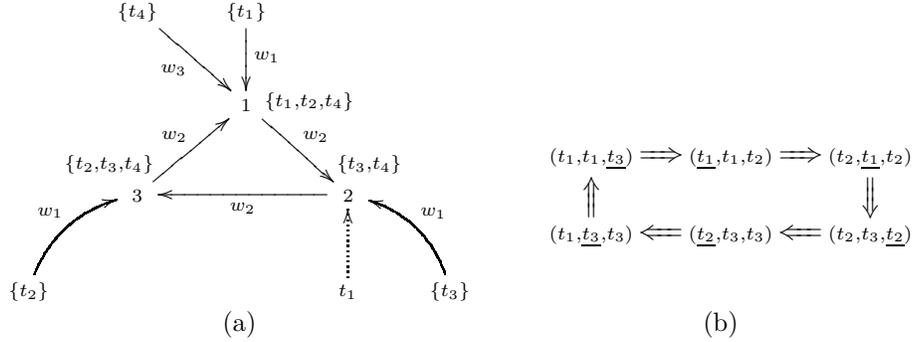
Let the threshold be a constant $\theta$, where $\theta < w_3 < w_1 <
w_2$. Assume that each source node selects its unique
product. Identify each joint strategy that extends this selection with
the selection of the strategies by the nodes 1, 2 and 3. The joint
strategy $\strprofile=(t_1,t_3,t_3)$ is a Nash equilibrium in
$\mathcal{G}(\snet)$. 

Now consider the expansion $\snet'$ obtained by
adding the product $t_1$ to the product set of node 2 in $\snet$. The
joint strategy $\strprofile$ ceases to remain a Nash equilibrium in
$\mathcal{G}(\snet')$ since node 2 can profitably deviate to
$t_1$. Figure \ref{fig:efragile}(b) shows an infinite improvement path
starting in $(t_1,t_1,t_3)$. Therefore $\snet$ is $\exists$-fragile.

However, $\snet$ is not $\forall$-fragile. First, one can check that
$\strprofile=(t_1,t_3,t_3)$ is the only Nash equilibrium in
$\mathcal{G}(\snet)$ for which profitable additions exist. Below we
analyse the two profitable additions.

\begin{itemize}
\item Addition of $t_1$ to node 2. The following improvement path
\[2:t_1, 3:t_2, 1:t_4, 2:t_4, 3:t_4 \]
starting in $\strprofile$ terminates in the joint strategy
$(t_4,t_4,t_4)$ which is a Nash equilibrium.
\item Addition of $t_3$ to node 1. This triggers a unique one-step
  improvement path that terminates in a new Nash equilibrium
  $(t_3,t_3,t_3)$.
\HB
\end{itemize}
\end{example}

\section{Inefficient networks}
\label{sec:red}

The last two types of deficiency are concerned with product
removal. These form the dual versions of the paradoxes we have seen so far. 
In this section we study the following notions.

We say that a social network $\snet$ is $\te w$-\bfe{inefficient} if
for some Nash equilibrium $s$ in $\mathcal{G}(\snet)$, a contraction
$\snet'$ of $\snet$ exists such that some improvement path in
$\mathcal{G}(\snet')$ leads from $s$ to a Nash equilibrium $s'$ in
$\mathcal{G}(\snet')$ such that $s' >_w s$.  We note here that if the
contraction was created by removing a product from the product set of
node $i$, we impose that any improvement path in
$\mathcal{G}(\snet')$, given a starting joint strategy from
$\mathcal{G}(\snet)$, begins by having node $i$ making a choice (we
allow any choice from his remaining set of products as an improvement
move). Otherwise the initial payoff of node $i$ in
$\mathcal{G}(\snet')$ is not well-defined.

As in the case of the vulnerability, we have
four notions of inefficiency that correspond to the combinations $X
Y$, where $X \in \{\te, \fa\}$ and $Y \in \{w,s\}$. For example, we
say that $\snet$ is $\fa s$-\bfe{inefficient} if for some
Nash equilibrium $s$ in $\mathcal{G}(\snet)$, a contraction $\snet'$
of $\snet$ exists such that each improvement path in
$\mathcal{G}(\snet')$ leads from $s$ to a Nash equilibrium $s'$ in
$\mathcal{G}(\snet')$ such that $s' >_s s$.

We now show that the implications between the various notions shown in
Figure~\ref{fig:vul} for the case of vulnerable networks are also
proper implications for inefficient networks.  However, in contrast to
the concept of vulnerability, here even when there are only two
choices ($|P|=2$), there exist $\fa s$-inefficient networks.

\begin{example}[$\fa s$] \label{exa:forall-s-contraction}
\rm

We exhibit in Figure \ref{fig:forall-s-contraction} an example of a $\fa
s$-inefficient network. The weight of each edge is assumed to be $w$, and 
we also have the same
product-independent threshold, $\theta$, for all nodes, with $w> \theta$.



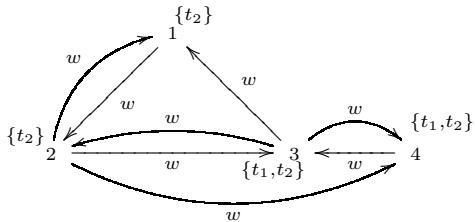
\begin{figure}[ht]
\centering
$
\def\objectstyle{\scriptstyle}
\def\labelstyle{\scriptstyle}
\xymatrix@W=10pt @R=35pt @C=30pt{
& 1 \ar[dl]^{w} \ar@{}[rr]^<{\{t_2\}}& & \\
2 \ar[rr]_{w} \ar@/^1pc/[ru]^{w} \ar@/_1.6pc/[rrr]_{w} \ar@{}@/^1pc/[ru]^<{\{t_2\}}& & 3 \ar[lu]_{w} \ar@{}[ll]^<{\{t_1,t_2\}} \ar@/^1pc/[r]^{w} \ar@/_0.7pc/[ll]_{w} & 4 \ar[l]^{w} \ar@{}[lu]_<{\{t_1,t_2\}}\\
}$
\caption{\label{fig:forall-s-contraction}An example of a $\forall s$-inefficient network}
\end{figure}

Consider as the initial Nash equilibrium the joint strategy $s = (t_2, t_2,
t_1, t_1)$. It is easy to check that this is a Nash equilibrium,
with a payoff equal to $w - \theta$ for all nodes. 
Suppose now that we remove
product $t_1$ from the product set of node $3$. We claim that all
improvement paths then lead to the Nash equilibrium in which all nodes adopt
$t_2$. 

To see this, we simply analyze all the cases that can arise.  Since
node $3$ moves first in the improvement path, it can select either
$t_0$ or $t_2$. If it selects $t_2$, then it will never change again
his decision because it has a positive payoff due to node $1$, and
$t_0$ will never become a better choice. Hence it will then be the
turn of node $4$ to switch. If it switches to $t_0$,
then  it will have to change again and switch to $t_2$
because of the support for the choice of $t_2$ (from node
$1$). This means that eventually all nodes switch
to $t_2$.

Suppose now that in the beginning node $3$ selected $t_0$.
Subsequently, node $4$ can also switch to $t_0$.  However the main
observation here is that $t_0$ is not a best response for any of these
nodes. Both nodes receive a positive payoff for adopting $t_2$ thanks
to a support from node $1$. Hence no matter which of them switches to
$t_0$ at the beginning of the improvement path, eventually they will
both switch to $t_2$.

To conclude, all improvement paths result in all nodes adopting $t_2$
and producing a payoff of $2w - \theta$ for each node, which is
strictly better than the payoff in $s$.  
\HB
\end{example}

\begin{example}[$\fa w$] \label{exa:forall-w-contraction}

\rm
We now exhibit a network which is $\fa w$-inefficient but not $\te
s$-inefficient (and hence also not $\fa s$-inefficient). We proceed as
in Example~\ref{exa:tew} and add to the network given in Figure
\ref{fig:forall-s-contraction} a source node $5$ with the product set
$\{t_1\}$ and connect it to node $1$, using the same weight $w$ and
threshold $\theta$.  By the same argument as in Example~\ref{exa:tew}
the conclusion follows by virtue of the previous example.
\HB
\end{example}

We also remark that one can construct even simpler networks with three
nodes and two products that exhibit the same behaviour.
%

\begin{example}[$\te s$] \label{exa:exists-s-contraction}

\rm
Next, we exhibit a network that is $\te s$-inefficient but not $\fa
w$-inefficient. The network is shown in Figure
\ref{fig:exists-s-contraction}.  The weight of each edge is assumed to
be $w$, and we also have a product-independent threshold $\theta$
(with $w> \theta$), that applies to all nodes and products except
three cases: $\theta(1, t_3) < \theta$, and $\theta(5, t_2) =
\theta(5, t_3) > \theta$. Note that in the underlying graph, each node
has exactly two incoming edges, one from the set $\{1, 2\}$ and one
from $\{3, 4, 5\}$.


\begin{figure}[ht]
\centering
$
\def\objectstyle{\scriptstyle}
\def\labelstyle{\scriptstyle}
\xymatrix@W=10pt @R=35pt @C=30pt{
& 2 \ar[dl]^{w} \ar[rr]^{w} \ar@{}[rr]^<{\{t_2,t_3\}}& &4 \ar[dr]^{w} \ar@{}[rd]^<{\{t_1,t_2,t_3\}}\\
1 \ar[rr]_{w} \ar@/^1pc/[ru]^{w} \ar@/_1.6pc/[rrrr]_{w} \ar@{}@/^1pc/[ru]^<{\{t_2\}}& & 3 \ar[lu]_{w} \ar@{}[ll]^<{\{t_1,t_2,t_3\}} \ar[ru]_{w} \ar@/_0.7pc/[ll]_{w} & & 5 \ar[ll]_{w} \ar@{}[lu]_<{\{t_1,t_2,t_3\}}\\
}$
\caption{\label{fig:exists-s-contraction}An example of a $\exists s$-inefficient network}
\end{figure}
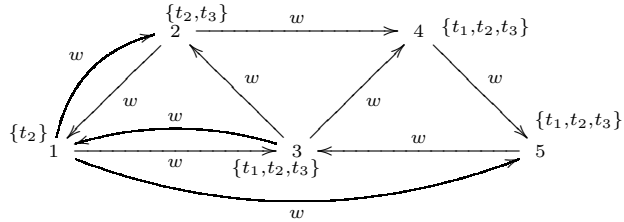

To see first that this is a $\te s$-inefficient network, consider the
Nash equilibrium $(t_2, t_2, t_1, t_1, t_1)$. In this joint strategy
each node receives suport from exactly one out of its two
neighbours. If we delete $t_1$ from the choice set of node $3$, then
we can see that there is an improvement path that converges to all
nodes adopting $t_2$ (by having first node $3$ adopt $t_2$, followed
by nodes $4$ and $5$). Hence in this new Nash equilibrium every node
receives support from all its neighbours and the payoff of everyone
has strictly increased.

To argue now that this is not a $\fa w$-inefficient network, we need
to consider all Nash equilibria and argue about all the possible
contractions.  One can verify that the initial game has four Nash
equilibria, namely $(t_2, t_2, t_1, t_1, t_1)$, $(t_2, t_2, t_2, t_2,
t_2)$, $(t_0, t_0, t_1, t_1, t_1)$, and $(t_0, t_3, t_3, t_3,
t_3)$. The idea behind this example is that in any contraction, either
node $1$ or node $2$ or node $5$ will get worse off in some
improvement path. This will happen either because nodes $1$ or $2$
will lose support for $t_2$ or because node $5$ will end up with
product $t_3$, which is a worse choice for him than $t_1$.  

Let us analyze the first Nash equilibrium. Note that if the
contraction deletes $t_2$ from the product set of node $1$ or $2$,
then node $2$ will end up being worse off. If the contraction deletes
$t_1$ from node $3$, then consider the following improvement path
\[3:t_3, 4:t_3, 5:t_3, 1:t_3.\]
The last profitable deviation in this path is ensured by our
assumption that $\theta(1, t_3) < \theta$. This implies that node $2$
is worse off at the end. Similar improvement paths can be found if the
contraction involves nodes $4$ or $5$.

If we consider the second Nash equilibrium, $(t_2, t_2, t_2, t_2,
t_2)$, it is even easier to see that any contraction makes at least
one node worse off in some improvement path since all nodes apart from
node $5$ receive the maximum possible payoff under this joint strategy.

Next, we analyze the third Nash equilibrium, $(t_0, t_0, t_1, t_1,
t_1)$. Here the most interesting contraction is to remove $t_1$ from
node $3$ (the same things hold if we remove it from nodes $4$ or
$5$). In that case, we can have the improvement path
\[3:t_3, 4:t_3, 5:t_3, 1:t_3  \]
 where node $5$ becomes worse off since $\theta(5, t_3) > \theta$. The
 analysis for the fourth Nash equilibrium is similar and omitted.
\HB
\end{example}

\begin{example}[$\te w$] \label{exa:exists-w-contraction}

\rm
Finally, we exhibit a $\te w$-inefficient network that is neither $\fa
w$-inefficient nor $\te s$-inefficient.  We proceed as in
Example~\ref{exa:forall-w-contraction} and simply modify the previous
example.  We add to the network given in Figure
\ref{fig:exists-s-contraction} a source node $6$ with the product set
$\{t_2\}$ and connect it to node $1$, using the same weight $w$ and
threshold $\theta$.  By the same argument as in Examples~\ref{exa:tew}
and~\ref{exa:forall-w-contraction} 
the conclusion follows by virtue of the previous example.

Actually, the argument that this network is not $\fa w$-inefficient
becomes now simpler because after the addition of the source node $6$
the initial game has only two Nash equilibria, namely $(t_2, t_2, t_1,
t_1, t_1, t_2)$ and $(t_2, t_2, t_2, t_2, t_2, t_2)$.
\HB
\end{example}

\section{Unsafe networks}
\label{sec:uns}

Finally, we have three notions that are counterparts of
the fragility notions. We say that a social network $\snet$ is
$\te$-\bfe{unsafe} (respectively, $\fa$-\bfe{unsafe}) if for some Nash
equilibrium $s$ in $\mathcal{G}(\snet)$, a contraction $\snet'$ of
$\snet$ exists such that some (respectively, each) improvement path in
$\mathcal{G}(\snet')$ that starts in $s$ is infinite.  Further, a
social network $\snet$ is \bfe{unsafe} if $\mathcal{G}(\snet)$ has a
Nash equilibrium, while for some contraction $\snet'$ of $\snet$,
$\mathcal{G}(\snet')$ does not.

Analogously to Section~\ref{sec:fra} each unsafe network is
$\fa$-unsafe, while each $\fa$-unsafe network is $\te$-unsafe. We now
prove that these two implications are proper.

\begin{example}[Unsafe] 
\label{ex:unsafe}
\normalfont Let $\snet_1$ be the modification of the network $\snet$
given in Figure~\ref{fig:fragile} where node 1 and the source node
marked with $\{t_1\}$ has the product set $\{t_1,t_2\}$. Consider the
joint strategy in which this source node along with node 1 choose
$t_2$, nodes 2 and 3 choose $t_3$ and nodes marked by $\{t_2\}$ and
$\{t_3\}$ choose the unique product in their product set. This is a
Nash equilibrium in $\snet_1$. Now consider the contraction $\snet_2$
of $\snet_1$ in which the product $t_2$ is removed from the source
node with product set $\{t_1,t_2\}$. Then $\snet_2$ is same as the
network $\snet'$ in Example~\ref{ex:fragile}. Following the argument
in Example~\ref{ex:fragile} we conclude that the initial network
$\snet_1$ is unsafe.
\HB
\end{example}

\begin{example}[$\forall$-unsafe]
\label{ex:aunsafe}
\normalfont Let $\snet_1$ be the modification of the network $\snet$
given in Figure~\ref{fig:afragile} where node 1 and the source node
marked $\{t_1\}$ has the product set $\{t_1,t_2\}$. Consider the joint
strategy in which this source node along with node 1 choose $t_2$,
nodes 2, 3 and the node marked $\{t_3\}$ choose $t_3$ and nodes marked
by $\{t_2\}$ choose the unique product in their product set. This is a
Nash equilibrium in $\snet_1$. Now consider the contraction $\snet_2$
of $\snet_1$ in which the product $t_2$ is removed from the source
node with product set $\{t_1,t_2\}$. Then $\snet_2$ is same as the network
$\snet'$ in Example~\ref{ex:afragile}. Following the argument in
Example~\ref{ex:afragile} we conclude that the initial network
$\snet_1$ is $\forall$-unsafe but not unsafe.
\HB
\end{example}

\begin{example}[$\exists$-unsafe]
\label{ex:eunsafe}
\normalfont Let $\snet_1$ be the modification of the network $\snet$
given in Figure~\ref{fig:efragile} where node 2 has the product set
$\{t_1,t_3,t_4\}$ and the source node marked with $\{t_3\}$ has the
product set $\{t_1,t_3\}$. Consider the joint strategy in which this
source node along with node 2 choose $t_1$, nodes 1 and 3 choose $t_2$
and nodes marked by $\{t_1\}$, $\{t_2\}$ and $\{t_4\}$ choose the
unique product in their product set. This is a Nash equilibrium in
$\snet_1$. Now consider the contraction $\snet_2$ of $\snet_1$ in
which the product $t_1$ is removed from the source node with product
set $\{t_1,t_3\}$. Then $\snet_2$ is same as the network $\snet'$ in
Example~\ref{ex:efragile}. Following the argument in
Example~\ref{ex:efragile} we conclude that the initial network
$\snet_1$ is $\exists$-unsafe but not $\forall$-unsafe.
\HB
\end{example}

\section{Networks without source nodes}
\label{sec:nosource}

Given the variety of paradoxes exhibited in the above examples it is
natural to investigate the status of selected networks. In this
section we focus first on networks where there are no source nodes.
This is a reasonable assumption in social networks as everybody
usually has some friends who influence his decisions. 
We first identify a property which ensures the non-existence of
$\exists w$-vulnerable networks, when the underlying graph has no
source nodes. 

For a joint strategy $\strprofile$ and product $t$, let
$\agents_t(\strprofile):=\{i \in V \mid \strprofile_i=t\}$, and
$\mathit{prod}(\strprofile):=\{\strprofile_i \mid i \in V\} \setminus
\{t_0\}$. Hence, $\mathit{prod}(\strprofile)$ is the set of distinct strategies that are used in profile $\strprofile$.
We let also $\obar{t}$ denote the joint strategy in which every
player selects $t$. We say that a profile $\strprofile$ is a \bfe{multiple product} profile, if
$|\mathit{prod}(\strprofile)|>1$.

\begin{theorem}
\label{thm:not-exists-w}
Consider a network $\snet$ whose underlying graph has no source nodes.
If $\mathcal{G}(\snet)$ does not have a multiple product Nash
equilibrium, then $\snet$ is not $\exists w$-vulnerable.
\end{theorem}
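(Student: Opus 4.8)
The plan is to prove the contrapositive-flavoured statement directly: assuming the hypothesis, I suppose toward a contradiction that $\snet$ is $\exists w$-vulnerable, and derive a multiple product Nash equilibrium of $\mathcal{G}(\snet)$. So fix a Nash equilibrium $\strprofile$ of $\mathcal{G}(\snet)$, an expansion $\snet'$ obtained by adding product $t_1$ to node $1$, and an improvement path $\fpath$ in $\mathcal{G}(\snet')$ from $\strprofile$ to a Nash equilibrium $\strprofile'$ with $\strprofile >_w \strprofile'$. The first key observation should be that $\strprofile'$ is genuinely different from $\strprofile$ and that node $1$ must have adopted $t_1$ somewhere along $\fpath$ — otherwise $\fpath$ lives entirely in $\mathcal{G}(\snet)$, where $\strprofile$ is already a Nash equilibrium, so $\fpath$ is the trivial path and $\strprofile' = \strprofile$, contradicting $\strprofile >_w \strprofile'$. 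In particular $t_1 \in \mathit{prod}(\strprofile')$ if node $1$ still plays $t_1$ at the end; the more delicate case is when node $1$ has moved away from $t_1$ again.

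The second step is to show $\strprofile'$ is a \emph{multiple product} profile, i.e. $|\mathit{prod}(\strprofile')| > 1$, which is where the no-source-node hypothesis enters. If $\strprofile'$ used at most one product, then either $\strprofile' = \obar{t_0}$ or $\strprofile' = $ "some players play a single product $t$, the rest play $t_0$". I would rule out $\strprofile' = \obar{t_0}$: since $\strprofile >_w \strprofile'$ and $\strprofile'$ gives everyone payoff $0$, this would force $p_i(\strprofile) \ge 0$ for all $i$ with strict inequality somewhere; but with no source nodes, a non-$t_0$ strategy always yields a \emph{negative} payoff unless it is sufficiently supported, and one has to check that $\obar{t_0}$ cannot be dominated in the $>_w$ sense — actually the cleaner route is: in $\obar{t_0}$ no player can profitably deviate only if every single-product unilateral move is unprofitable, i.e. for every $i$ and every $t \in \prodset(i)$, $\sum_{j\in\inflset_i^t(\obar{t_0})} w_{ji} - \theta(i,t) \le 0$, which holds trivially since that influence sum is $0$ and $\theta > 0$; so $\obar{t_0}$ is always a Nash equilibrium, but it is a single-product (indeed zero-product) profile, so this case must be excluded by the $>_w$ comparison. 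The point to nail down: if $\strprofile'$ has $\mathit{prod}(\strprofile')$ empty or a singleton, I want to contradict $\strprofile >_w \strprofile'$ or else exhibit that $\strprofile'$, viewed in $\mathcal{G}(\snet)$, is already a Nash equilibrium using only products in $\prodset$, hence a non-multiple-product Nash equilibrium — but that is not quite the hypothesis, which forbids only \emph{multiple} product Nash equilibria. So the real work is to show $|\mathit{prod}(\strprofile')| \ge 2$.

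Here is the mechanism I expect to use for that. Since $t_1$ was added only to node $1$, and $\strprofile'$ is a Nash equilibrium of $\mathcal{G}(\snet')$, if node $1$'s final strategy is $t_1$ then $\strprofile'$ already contains $t_1$; but node $1$ must then have neighbours in $\snet$ also playing $t_1$ (to make $t_1$ a best response — recall $\theta > 0$ so $t_1$ needs positive support), and those neighbours contribute a \emph{second} product to $\mathit{prod}(\strprofile')$ unless they also somehow all play $t_1$, which is impossible in $\mathcal{G}(\snet)$ since $t_1 \notin \prodset(j)$ for $j \ne 1$. Hence if node $1$ ends on $t_1$, then $t_1 \in \mathit{prod}(\strprofile')$ and node $1$ is unsupported for $t_1$ among nodes of the original network — giving $p_1(\strprofile') < 0$, so node $1$ would profitably deviate to $t_0$, contradicting that $\strprofile'$ is a Nash equilibrium. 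Therefore node $1$ ends on some $t \ne t_1$ (possibly $t_0$), so $\strprofile'$ is in fact a joint strategy of $\mathcal{G}(\snet)$ and a Nash equilibrium there too (its best-response conditions are unaffected by deleting the unused strategy $t_1$). Then by the hypothesis $\strprofile'$ cannot be a multiple product Nash equilibrium, so $|\mathit{prod}(\strprofile')| \le 1$; combined with $\strprofile >_w \strprofile'$ and the fact that $\obar{t_0}$ and all single-product-plus-$t_0$ profiles that are Nash equilibria give each player either $0$ or a fixed bounded payoff, I would push through the arithmetic to contradict $\strprofile >_w \strprofile'$ — typically because $\strprofile$ itself would then have to be "even more single-product", forcing $\strprofile = \strprofile'$ or forcing $\strprofile$ to be non-Nash.

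The main obstacle, I expect, is precisely this last bookkeeping: showing that a $>_w$-comparison between the original Nash equilibrium $\strprofile$ and a single-product (or $t_0$-only) Nash equilibrium $\strprofile'$ of $\mathcal{G}(\snet)$ is impossible. The clean way is probably to observe that along $\fpath$ node $1$'s payoff only ever went \emph{up} relative to its start (its first move to $t_1$ was a profitable deviation, and "join the crowd" monotonicity handles the rest, exactly as in the proof of the two-product theorem above) — so $p_1(\strprofile') \ge p_1(\strprofile)$, hence the weak-worsening $\strprofile >_w \strprofile'$ must be witnessed by some \emph{other} player strictly losing, while node $1$ at least breaks even; then one chases which players could have changed and shows the final profile must retain at least two distinct products for node $1$'s eventual support to be sustainable, contradicting $|\mathit{prod}(\strprofile')| \le 1$. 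I would structure the final write-up as: (1) node $1$ adopts $t_1$ on $\fpath$; (2) node $1$'s payoff never decreases along $\fpath$; (3) therefore $\strprofile' \ne \obar{t_0}$ and node $1$'s final product is supported, forcing a second product; (4) hence $\strprofile'$ is a multiple product Nash equilibrium of $\mathcal{G}(\snet)$ (after checking the added strategy $t_1$, if used, reduces to an original-network witness), contradiction.
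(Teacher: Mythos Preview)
Your proposal has a genuine gap and, more fundamentally, aims at the wrong target.

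First, the error of fact: you assert that ``$t_1 \notin \prodset(j)$ for $j \neq 1$''. Nothing in the definition of an expansion says this. Adding $t_1$ to node $1$'s product set does not preclude $t_1$ from already being present in other nodes' product sets. So your argument that node $1$ cannot be supported on $t_1$ in $\strprofile'$ collapses.

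Second, your step (2), ``node $1$'s payoff never decreases along $\fpath$'', is not justified and is not what the join-the-crowd property gives you. Along a general improvement path other players may switch \emph{away} from node $1$'s current product, strictly lowering node $1$'s payoff. The two-product theorem you invoke does not prove monotonicity of one player's payoff along an arbitrary path; it constructs a very particular path (alternating $t_1$-phases and $t_0$-phases) and argues about best responses, which is a different statement.

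More importantly, the whole strategy of showing that $\strprofile'$ is a multiple product Nash equilibrium is aimed in the wrong direction. The paper's proof works on the \emph{starting} profile $\strprofile$, not on $\strprofile'$. The hypothesis says directly that $\mathit{prod}(\strprofile) \subseteq \{t\}$ for some single $t$. From there one checks, case by case on the added product $t'$ and on $\strprofile_i$ at the modified node $i$: if $t' \neq t$, or if $\strprofile_i = t$, there is no profitable deviation from $\strprofile$ in $\mathcal{G}(\snet')$ at all; if $t' = t$ and $\strprofile_i = t_0$, then every step of any improvement path is a switch \emph{to} $t$ (here join-the-crowd is used correctly, to show $t$ remains a best response once adopted), so node $i$'s final payoff is strictly higher than in $\strprofile$, contradicting $\strprofile >_w \strprofile'$. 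In particular $\strprofile'$ is again a single-product profile in this case, so trying to force it to have two products cannot succeed.
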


To prove this result, we use a specific structural property of Nash
equilibria in networks whose underlying graph has no source nodes.
Below, we only consider subgraphs that are \emph{induced} and identify
each such subgraph with its set of nodes. Recall that $(V',E')$ is an
induced subgraph of $(V,E)$ if $V' \sse V$ and $E' = E \cap (V' \times
V')$. For subgraphs $C_1$ and $C_2$, we denote by $C_1 \cap C_2$ the
intersection of the nodes of the graphs. We say that a (non-empty)
strongly connected subgraph (in short, SCS) $C$ of $\sgraph$ is
\bfe{self sustaining} for a product $t$ if for all $i \in C$,

\begin{itemize}
\item $t \in \prodset(i)$,
\item $\sum\limits_{j \in \neighbour(i)\cap C} w_{ji} \geq
  \theta(i,t)$.
\end{itemize}

Hence, $C$ is a self sustaining SCS for a product $t$ if assigning this
product to every node in $C$ ensures that each node in $C$ gets a
non-negative payoff. A self sustaining SCS $C$ is \bfe{minimal} for
product $t$ if no subgraph $C'$ of $C$ is a self sustaining SCS
for product $t$. First we prove the following auxiliary result.

\begin{lemma}
\label{lm:Ne-SCSstruct}
Let $\snet=(\sgraph,\products,\prodset,\theta)$ be a network whose
underlying graph has no source nodes. If $\strprofile \neq \obar{t_0}$
is a Nash equilibrium in $\mathcal{G}(\snet)$, then for all products
$t \in \mathit{prod}(\strprofile)$, there exists a minimal self
sustaining SCS $C$ for $t$ such that $C \subseteq
\agents_t(\strprofile)$.
\end{lemma}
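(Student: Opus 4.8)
The plan is to start with a Nash equilibrium $\strprofile \neq \obar{t_0}$, fix a product $t \in \mathit{prod}(\strprofile)$, and work inside the set $A := \agents_t(\strprofile) = \{i \in V \mid \strprofile_i = t\}$, which is non-empty by the choice of $t$. First I would observe that $A$ is ``inward closed'' in the following sense: for every $i \in A$, since $\strprofile$ is a Nash equilibrium and $i$ is not a source node, $\strprofile_i = t$ must give $i$ a non-negative payoff (because $t_0$ always yields $0$), i.e.\ $\sum_{j \in \neighbour(i) \cap A} w_{ji} = \sum_{j \in \inflset_i^t(\strprofile)} w_{ji} \geq \theta(i,t)$. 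Hence the induced subgraph on $A$ already satisfies the self-sustaining inequality at every node, and $t \in \prodset(i)$ for every $i \in A$ since $i$ plays $t$. The only thing $A$ may lack is strong connectivity.

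Next I would pass to a strongly connected piece of $A$ that still works. The key point is that the self-sustaining inequality at a node $i$ only refers to in-neighbours of $i$, so if I take a terminal (``bottom'') strongly connected component $C$ of the induced subgraph on $A$ — one with no edges leaving $C$ within $A$ — then for every $i \in C$ we have $\neighbour(i) \cap A \subseteq C$, and therefore $\sum_{j \in \neighbour(i) \cap C} w_{ji} = \sum_{j \in \neighbour(i) \cap A} w_{ji} \geq \theta(i,t)$. Such a terminal SCC exists because the condensation (DAG of SCCs) of any finite non-empty digraph has at least one sink. Also $C$ is non-empty, $C \subseteq A$, and $t \in \prodset(i)$ for all $i \in C$. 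Thus $C$ is a self-sustaining SCS for $t$ contained in $\agents_t(\strprofile)$.

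Finally, to get minimality, I would simply take a self-sustaining SCS for $t$ inside $C$ that is minimal with respect to the subgraph (inclusion) order — this exists because $C$ is finite and the family of self-sustaining SCSs of $C$ for $t$ is non-empty (it contains $C$). By definition this minimal one, call it $C'$, has no proper self-sustaining SCS sub-subgraph, and $C' \subseteq C \subseteq \agents_t(\strprofile)$, which is exactly the claim.

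The only mildly delicate point is the observation that self-sustainingness is preserved when restricting to a terminal SCC: one must use that the defining inequality at $i$ depends only on $\neighbour(i)$, i.e.\ on \emph{in}-edges, so that dropping nodes that are only reachable \emph{from} $C$ (but have no edge back into $C$) cannot decrease any of the sums $\sum_{j \in \neighbour(i)\cap C} w_{ji}$ for $i \in C$ — indeed these sums are unchanged. Everything else (non-emptiness of $A$, the Nash condition forcing the inequality, existence of a sink in the condensation, existence of a minimal element in a finite poset) is routine; the absence of source nodes is what guarantees that the Nash payoff of a node playing $t$ is genuinely $\sum_{j \in \inflset_i^t(\strprofile)} w_{ji} - \theta(i,t)$ and hence that this is $\ge 0$.
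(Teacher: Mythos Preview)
Your overall strategy matches the paper's: show that on the set $A=\agents_t(\strprofile)$ the Nash condition forces the self-sustaining inequality at every node, pass to a suitable strongly connected component, then take a minimal self-sustaining SCS inside it. The paper does essentially the same, except it first builds a smaller set $\spred\subseteq A$ by closing a single $t$-node under in-neighbours playing $t$; your shortcut of working directly with all of $A$ is fine.

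However, there is a genuine directional slip in your SCC step. You take a \emph{terminal} (sink) SCC $C$ of the induced subgraph on $A$, i.e.\ one with no edges \emph{leaving} $C$ inside $A$, and then claim $\neighbour(i)\cap A\subseteq C$ for every $i\in C$. That inference is about \emph{incoming} edges and does not follow from the sink property. Concretely: let $A=\{1,2,3,4\}$ with the two $2$-cycles $1\leftrightarrow 2$ and $3\leftrightarrow 4$ plus the edge $1\to 3$; the sink SCC is $\{3,4\}$, but $\neighbour(3)\cap A=\{1,4\}\not\subseteq\{3,4\}$. If the weights are chosen so that node $3$ needs the contribution of $1$ to meet its threshold (which is perfectly compatible with $\strprofile$ being a Nash equilibrium), then $\{3,4\}$ is not self-sustaining for $t$. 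Your remark about ``dropping nodes that are only reachable from $C$ but have no edge back into $C$'' is likewise reversed: for a sink SCC there are no such nodes, and the nodes you actually drop may well have edges into $C$.

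The fix is simply to take a \emph{source} SCC of the induced subgraph on $A$ (no edges from $A\setminus C$ into $C$); then indeed $\neighbour(i)\cap A\subseteq C$ for each $i\in C$, so $\sum_{j\in\neighbour(i)\cap C}w_{ji}=\sum_{j\in\neighbour(i)\cap A}w_{ji}\geq\theta(i,t)$ and $C$ is self-sustaining. This is exactly what the paper does: it orders the SCCs by $C<C'$ iff there is an edge from $C$ to $C'$ and picks a component that is minimal in this order. With that correction your argument is complete and equivalent to the paper's.
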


\begin{proof}
Suppose $s \neq \obar{t_0}$ is a Nash equilibrium. Take any product $t
\neq t_0$ and an agent $i$ such that $\strprofile_i=t$ (by assumption,
at least one such $t$ and $i$ exists).  Recall that
$\inflset_i^t(\strprofile)$ denotes the set of neighbours of $i$ who
adopted in $s$ the product $t$.  Consider the set of nodes
$\spred:=\bigcup_{m \in \nat} \spred_m$, where
\begin{itemize}
\item $\spred_0:=\{i\}$,

\item $\spred_{m+1}:=\spred_m \bigcup \big( \bigcup_{j \in \spred_m} \inflset_j^t(\strprofile) \big)$.

\end{itemize}

By construction, for all $j \in \spred$ we have $s_j = t$ and
$\inflset_j^t(\strprofile) \sse \spred$. Since
$\strprofile$ is a Nash equilibrium, also $\sum\limits_{k \in
  \inflset_j^t(\strprofile)} w_{kj} \geq \theta(j,t)$ holds.

Consider the partial ordering $<$ between the strongly connected components of
the graph induced by $\spred$
defined by: $C < C'$ iff $j \to k$ for some $j \in C$ and $k \in C'$.
Take now some SCS $C'$ induced by a strongly connected component that
is minimal
in the $<$ ordering. 
Then for all $k \in C'$ we have $\inflset_k^t(\strprofile) \sse
C'$ and hence $\inflset_k^t(\strprofile) \sse \neighbour(k)\cap
C'$.  This shows that $C'$ is self sustaining. It is then
straightforward to construct a minimal self sustaining SCS $C$ for
product $t$ which is a subgraph of $C'$.
\end{proof}

Given a network $\snet$ and a product $t \in \products$, let
$\mathcal{C}_t(\snet)$ be the set of all minimal self sustaining SCSs
for product $t$. Let $X_t(\snet)=\bigcap_{C \in \mathcal{C}_t(\snet),
  \mathcal{C}_t(\snet) \neq \emptyset} C$ and $Y(\snet)=\bigcap_{t\in
  \products, X_t(\snet) \neq \emptyset}X_t(\snet)$.


\medskip

\noindent{\it Proof of Theorem \ref{thm:not-exists-w}:} To prove the
theorem, we show in fact the following claim. Suppose that for the
network $\snet=(\sgraph,\products,\prodset,\theta)$ one of the following conditions
holds: 
\begin{enumerate} 
\item for all $t \in \products$, $\mathcal{C}_t(\snet) = \emptyset$,
\item $Y(\snet) \neq \emptyset$.  
\end{enumerate}
Then, $\snet$ is not $\exists w$-vulnerable.

In other words, if the network $\snet$ does not have any self
sustaining SCSs or if the intersection of the set of all minimal self
sustaining SCSs is non-empty then $\snet$ is not $\exists
w$-vulnerable.  Note that condition $1$ implies that $\obar{t_0}$ is
the only Nash equilibrium in $\mathcal{G}(\snet)$ and condition $2$
implies that $|\mathit{prod}(\strprofile)|=1$ for any Nash equilibrium
$\strprofile$ in $\mathcal{G}(\snet)$ that is different from
$\obar{t_0}$.

First suppose that for all $t \in \products$, $\mathcal{C}_t(\snet) =
\emptyset$. By Lemma~\ref{lm:Ne-SCSstruct} it follows that
$\obar{t_0}$ is the only Nash equilibrium in $\snet$. Consider any
expansion $\snet'$ of $\snet$. Then no player has a profitable
deviation from $\obar{t_0}$ in $\snet'$. Therefore, $\snet$ is not
$\exists w$-vulnerable.

Now suppose that $Y(\snet) \neq \emptyset$. In this case, we first
claim that every non-trivial Nash equilibrium $\strprofile$ in $\snet$
has the property that $\mathit{prod}(\strprofile) \subseteq \{t_1\}$
for some $t_1 \in \products$. Suppose this is not the case then for
some two different products $t_1$ and $t_2$, $\{t_1,t_2\} \subseteq
\mathit{prod}(\strprofile)$. Since $\strprofile$ is a Nash
equilibrium, by Lemma~\ref{lm:Ne-SCSstruct}, there exists a minimal
self sustaining SCS $C_1 \subseteq \agents_{t_1}(\strprofile)$ for
$t_1$ and a minimal self sustaining SCS $C_2 \subseteq
\agents_{t_2}(\strprofile)$ for $t_2$. By definition,
$\agents_{t_1}(\strprofile) \cap \agents_{t_2}(\strprofile)
=\emptyset$ and therefore, $\mathcal{C}_{t_1}(\snet) \cap
\mathcal{C}_{t_2}(\snet) = \emptyset$. This contradicts the assumption
that $Y(\snet) \neq \emptyset$.

Consider a Nash equilibrium $\strprofile$ in $\snet$ and an expansion
$\snet'$. By the above claim $\mathit{prod}(\strprofile)\subseteq
\{t_1\}$ for some $t_1 \in \products$.  In the expansion $\snet'$, if
the new product $t_2$ is added to a node $i$, where $\strprofile_i =
t_1$, then there is no profitable deviation from $\strprofile$ and
consequently, no improvement path starting at $\strprofile$ in
$\mathcal{G}(\snet')$. Thus the interesting case is when the new
product $t_2$ is added to a node $i$ where $\strprofile_i=t_0$. We
have two cases.
\begin{itemize}
\item \emph{Case 1:} $t_2 \neq t_1$. Since $\mathit{prod}(\strprofile)
  \subseteq \{t_1\}$, there is no profitable deviation from
  $\strprofile$ and therefore, no improvement path starting at
  $\strprofile$ in $\mathcal{G}(\snet')$.
\item \emph{Case 2:} $t_2 = t_1$. Consider any improvement path in
  $\mathcal{G}(\snet')$ that leads to a Nash equilibrium
  $\strprofile'$ in $\mathcal{G}(\snet')$. Since $\strprofile$ is a
  Nash equilibrium in $\mathcal{G}(\snet)$ the first profitable
  deviation in the improvement path is of the form $i:t_1$. In the
  improvement path, if a joint strategy $s^2$ is obtained from $s^1$
  by having some nodes switch to product $t_1$ and $t_1$ is a best
  response for a node $j$ to $s^1_{-j}$, then $t_1$ is also a best
  response for $j$ to $s^2_{-j}$. Indeed, by the join the crowd
  property $p_{j}(t_1, s^2_{-j}) \geq p_{j}(t_1, s^1_{-j}) \geq
  p_j(t_1,s^1_{-j}) = 0 = p_j(t_0,s^2_{-j})$. So the only deviations
  in this improvement path are to $t_1$.
Consequently in $\strprofile'$ which is a Nash equilibrium, $t_1$ is
the product selected by node $i$ (i.e., $\strprofile_i'=t_1$) and
$\payoff_i(\strprofile') > \payoff_i(\strprofile)$. Therefore, the
network is not $\exists w$-vulnerable.
\end{itemize}
\HB

\section{Simple cycle networks}
\label{sec:cycle}

In this section we focus on networks where the underlying graph is a
simple cycle. To fix the notation suppose that the underlying graph is
$1 \to 2 \to \LL \to n \to 1$.  We assume that the counting is done in
cyclic order within \C{\LLn} using the increment operation $i \oplus
1$ and the decrement operation $i \ominus 1$. In particular, $n \oplus
1 = 1$ and $1 \ominus 1 = n$. We start with the following corollary to
Theorem \ref{thm:not-exists-w}.

We begin with the following observation.

\begin{note} \label{not:simple}
Consider a simple cycle network. Then $s$ is a Nash equilibrium
of the game $\mathcal{G}(\snet)$ iff $s = \bar{t}$
for some $t \in \products \cup \{t_0\}$.
\end{note}
\begin{proof}
Consider a network
$\snet=(\sgraph,\products,\prodset,\theta)$, where $\sgraph$ is a
simple cycle.  
First note that $\obar{t_0}$ is always a Nash equilibrium. 
Consider now a Nash
equilibrium $\strprofile \neq \obar{t_0}$. Then there exists a product
$t$ and a node $i$ such that $s_i = t$. Since $\strprofile$ is a Nash
equilibrium, we have $p_i(s) \geq p_i(t_0, s_{-i}) = 0$, so $s_{i
  \ominus 1} = t$ as well (otherwise the node $i$ would have negative payoff). Iterating this reasoning we conclude that
$s = \obar{t}$. 
\end{proof}

\begin{corollary}
\label{cor:scycle-vul}
Simple cycle networks are not $\exists w$-vulnerable (and a fortiori
not $X Y$-vulnerable, where $X \in \{\te, \fa\}$ and $Y \in \{w,s\}$).
\end{corollary}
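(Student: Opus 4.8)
The plan is to derive Corollary~\ref{cor:scycle-vul} directly from Theorem~\ref{thm:not-exists-w} together with Note~\ref{not:simple}. First I would observe that a simple cycle has no source nodes: every node $i$ has exactly one incoming edge, from $i \ominus 1$, so $\neighbour(i) \neq \emptyset$ for all $i$. Hence the hypothesis of Theorem~\ref{thm:not-exists-w} concerning the absence of source nodes is met, and it remains only to check that $\mathcal{G}(\snet)$ has no multiple product Nash equilibrium.

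The second step is exactly the content of Note~\ref{not:simple}: every Nash equilibrium of a simple cycle network is of the form $\obar{t}$ for some $t \in \products \cup \{t_0\}$. For such a profile we have $\mathit{prod}(\obar{t}) = \{t\}$ if $t \neq t_0$ and $\mathit{prod}(\obar{t_0}) = \emptyset$, so in either case $|\mathit{prod}(\strprofile)| \leq 1$. Thus no Nash equilibrium is a multiple product profile, and Theorem~\ref{thm:not-exists-w} applies to give that $\snet$ is not $\exists w$-vulnerable.

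Finally I would invoke the implication diagram of Figure~\ref{fig:vul}: $\fa s$-vulnerability implies $\fa w$-vulnerability and $\te s$-vulnerability, and each of $\fa w$ and $\te s$ implies $\te w$. Hence the absence of $\exists w$-vulnerability rules out all four notions $XY$-vulnerability with $X \in \{\te,\fa\}$, $Y \in \{w,s\}$, which is the ``a fortiori'' clause. I do not anticipate any real obstacle here: the corollary is a straightforward specialization, the only mild point being to note explicitly that simple cycles are source-free so that Theorem~\ref{thm:not-exists-w} is legitimately applicable — everything else is bookkeeping on the definitions and the already-established Note~\ref{not:simple}.
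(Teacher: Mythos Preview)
Your proposal is correct and matches the paper's own proof exactly: the paper simply states that the corollary is an immediate consequence of Note~\ref{not:simple} and Theorem~\ref{thm:not-exists-w}. Your additional remarks (that simple cycles have no source nodes, and the invocation of Figure~\ref{fig:vul} for the ``a fortiori'' clause) just make explicit what the paper leaves implicit.
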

\begin{proof}
This is an immediate consequence of Note~\ref{not:simple}
and Theorem~\ref{thm:not-exists-w}.
\end{proof}

The remaining types of deficiency are easy to determine.
For the case of fragile networks we prove the following result.

\begin{theorem}
Simple cycle networks are not $\exists w$-fragile
(and a fortiori not $\forall w$-fragile and not fragile).
\end{theorem}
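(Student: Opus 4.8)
I want to show that starting from any Nash equilibrium $s$ of a simple cycle network $\snet$, any expansion $\snet'$, and any improvement path in $\mathcal{G}(\snet')$ that begins at $s$, the path is finite. By Note~\ref{not:simple}, the only Nash equilibria of $\mathcal{G}(\snet)$ are the uniform profiles $\obar{t}$ for $t \in \products \cup \{t_0\}$, so $s = \obar{t}$ for some such $t$. Let the expansion add a new product, say $t'$, to the product set of some node, say node $k$. I would first dispose of the cases where the path is trivial: if $t' \neq t$, or if $t = t_0$, then in the profile $\obar{t}$ node $k$ is surrounded by neighbours all playing $t$ (or all playing $t_0$), so adding $t'$ to $k$'s product set creates no profitable deviation at all — the path has length zero and we are done. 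The only remaining case is $t' = t$ with $t \neq t_0$, i.e., node $k$ already could (and does) play $t$; but then again $\obar{t}$ remains a Nash equilibrium in $\mathcal{G}(\snet')$, so no improvement path starts there.

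Wait — this means the only way $\obar t$ could cease to be an equilibrium in the expansion is covered by the above, and in every case no nontrivial improvement path exists. So in fact the statement follows almost immediately from Note~\ref{not:simple}: after any single-product expansion of a simple cycle network, every Nash equilibrium of the original game remains a Nash equilibrium of the expanded game (the new strategy $t'$ at node $k$ can only help $k$ if some neighbour of $k$ plays $t'$, but in a uniform profile $\obar t$ with $t\neq t'$ no neighbour does, and if $t = t'$ then $t'$ was already available). Hence no improvement path starts at any Nash equilibrium of $\mathcal{G}(\snet)$, let alone an infinite one, so $\snet$ is not $\exists w$-fragile. The parenthetical consequences ($\forall w$-fragile, fragile) follow from the implications already recorded in Section~\ref{sec:fra} (each fragile network is $\forall$-fragile, each $\forall$-fragile network is $\exists$-fragile), read contrapositively.

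I would present the argument in this order: (1) invoke Note~\ref{not:simple} to write $s = \obar t$; (2) let $\snet'$ add product $t'$ to node $k$; (3) check that $s$ remains a Nash equilibrium of $\mathcal{G}(\snet')$ — the payoff of $k$ from $t'$ in profile $s$ is $\sum_{j \in \inflset_k^{t'}(s)} w_{jk} - \theta(k,t')$, which is $-\theta(k,t') < 0$ when $t \neq t'$ (empty sum) and hence not a profitable deviation, while if $t = t'$ there is nothing new; every other node's situation is unchanged; (4) conclude that no improvement path in $\mathcal{G}(\snet')$ starts at $s$, so in particular none is infinite; (5) note this holds for every Nash equilibrium $s$ of $\mathcal{G}(\snet)$ and every expansion, so $\snet$ is not $\exists w$-fragile; (6) derive the parenthetical claims from the implication chain in Section~\ref{sec:fra}.

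**Main obstacle.** Honestly, there is no serious obstacle here — the content is entirely front-loaded into Note~\ref{not:simple}. The one point requiring a moment's care is the source-node boundary case: if $k$ happened to be a source node the payoff formula is different, but source nodes cannot receive a new product in a way that matters since their payoff depends only on whether they abstain, and anyway in a simple cycle with $n > 1$ every node has exactly one incoming edge, so there are no source nodes at all — I would state this explicitly. The only thing to be careful about in writing is to make sure the "and a fortiori" parenthetical is justified by citing the correct direction of the implications from Section~\ref{sec:fra}, rather than asserting it without reference.
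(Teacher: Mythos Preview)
Your proposal is correct and takes essentially the same approach as the paper: invoke Note~\ref{not:simple} to conclude that every Nash equilibrium of $\mathcal{G}(\snet)$ is a uniform profile $\obar{t}$, observe that any such profile remains a Nash equilibrium after an expansion (since the newly added product has no support among the neighbours), and conclude that no improvement path---let alone an infinite one---can start from $s$ in $\mathcal{G}(\snet')$. The paper's proof is terser (it simply asserts ``Hence $s$ remains a Nash equilibrium of $\mathcal{G}(\snet')$'' without the case split you spell out), but the argument is the same.
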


\begin{proof}
  Consider a simple cycle network
  $\snet=(\sgraph,\products,\prodset,\theta)$, a Nash equilibrium $s$
  of $\mathcal{G}(\snet)$, and an expansion $\snet'$ of $\snet$.  By
  Note~\ref{not:simple} $s = \obar{t}$, where $t \in
  \products \cup \{t_0\}$. Hence $s$ remains a Nash equilibrium of
  $\mathcal{G}(\snet')$.  Consequently $\snet$ is not $\exists
  w$-fragile.
\end{proof}

In the case of inefficient networks we have the following result.
\begin{theorem}
\mbox{}
\begin{enumerate}[(i)]
\item There exists a simple cycle network $\snet$ that is $\exists
  s$-inefficient (and a fortiori $\exists w$-inefficient).

\item Simple cycle networks are not $\forall w$-inefficient (and a fortiori not $\forall s$-inefficient).
\end{enumerate}
\end{theorem}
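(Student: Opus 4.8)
The plan is to handle the two parts separately, using Note~\ref{not:simple} as the structural backbone throughout. For part~(i) I would simply exhibit a small concrete example. Take the cycle $1 \to 2 \to 3 \to 1$ with a uniform weight $w$ on each edge and products chosen so that node $1$ has product set $\{t_1\}$, node $2$ has $\{t_1\}$, and node $3$ has $\{t_1, t_2\}$, with a product-independent threshold $\theta < w$. By Note~\ref{not:simple} the Nash equilibria of $\mathcal{G}(\snet)$ are exactly $\obar{t_0}$ and $\obar{t_1}$ (since $t_2$ is not available to everyone), and $\obar{t_1}$ gives each node payoff $w - \theta > 0$. Now contract by removing $t_1$ from node $3$; in the contracted network node $3$ must move first, and since $t_1$ is no longer available to it, it is forced to pick $t_0$ (its only other option, as $t_2$ gives it negative payoff once node $2$ has abandoned $t_1$ —- wait, initially node $2$ still plays $t_1$). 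The cleaner route: design the example so that in $\snet'$ the unique reachable Nash equilibrium from $\obar{t_1}$ is strictly worse for everyone. The simplest honest instantiation is to mimic Example~\ref{exa:forall-s-contraction} restricted to a cycle: one checks by hand that the induced improvement path leads to $\obar{t_0}$, in which every payoff is $0 < w-\theta$, so $s' >_s s$ fails —- so instead I would set up the example so the final equilibrium is $\obar{t_2}$ with a \emph{lower} threshold for $t_2$, making it strictly better, which is the direction we want for $\exists s$-inefficiency... Actually the direction for inefficiency is $s' >_s s$, i.e.\ the \emph{new} equilibrium is better. So I want the contraction to force an improvement: give node $3$ product set $\{t_1, t_2\}$ with $\theta(i,t_2)$ small for every node that has $t_2$, so that removing $t_1$ from node $3$ drives the cycle to $\obar{t_2}$, where each payoff is $w - \theta(\cdot, t_2) > w - \theta$. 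By Note~\ref{not:simple} this $\obar{t_2}$ is indeed a Nash equilibrium of $\snet'$, and one verifies the improvement path $3:t_2, 2:t_2, 1:t_2$ (or its analogue) realizes it.

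For part~(ii), the argument should be purely structural. Let $\snet$ be a simple cycle network, $s$ a Nash equilibrium of $\mathcal{G}(\snet)$, and $\snet'$ a contraction obtained by deleting product $r$ from node $i$. By Note~\ref{not:simple}, $s = \obar{t}$ for some $t \in \products \cup \{t_0\}$. The key case split is on whether the deleted product $r$ equals $t$. If $r \neq t$, then $s$ is still a legal joint strategy in $\snet'$ and remains a Nash equilibrium there (deleting an unused product changes nothing), so no nontrivial improvement path starts at $s$ and the $\forall w$ condition fails vacuously. If $r = t$ (and $t \neq t_0$), then node $i$ must move first; I claim there is \emph{some} improvement path leading to an equilibrium that is \emph{not} strictly-or-weakly better than $s$ for all players. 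The natural candidate is the path that drives the whole cycle to $\obar{t_0}$: once node $i$ switches to $t_0$ (a legal first move, since removing $t$ from $i$ may leave it with only $t_0$ giving nonnegative payoff, or at least $t_0$ is available), node $i \oplus 1$ loses its only support and can profitably switch to $t_0$, and so on around the cycle, yielding $\obar{t_0}$ in which every payoff is $0 \le p_j(s)$ for all $j$. Hence $\obar{t_0}$ is not weakly better than $s$, so this particular improvement path witnesses the failure of $\forall w$-inefficiency.

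The main obstacle is the subcase $r = t$ where node $i$ \emph{still has $t$ available} is impossible (we deleted it), but node $i$ might have another product $t' \ne t_0$ in its set that is a best response, so its forced first move need not be to $t_0$; I must argue that \emph{regardless} of node $i$'s first move, \emph{there exists} a continuation reaching an equilibrium no better than $\obar{t}$ for everybody. If node $i$'s first move is to $t_0$, the cascade above applies. If node $i$ moves to some $t' \neq t_0$, then on a cycle node $i$ gets support only from $i \ominus 1$, who plays $t$, so $i$'s payoff under $t'$ is $w_{(i\ominus 1)i}\cdot[t = t'] - \theta(i,t') < 0$ unless... it is negative, contradicting that this was a profitable deviation from $s_i = t$ where $p_i(s) \ge 0$. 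So in fact on a simple cycle node $i$'s \emph{only} profitable (indeed only nonnegative) first move after losing $t$ is $t_0$, which forces the cascade to $\obar{t_0}$ and closes the argument. Writing this cleanly —- in particular checking that every step of the cascade is genuinely a profitable deviation and that $\obar{t_0}$ is reached (not some partial configuration that gets stuck) —- is the one place requiring care, but it follows the same pattern as Note~\ref{not:simple}'s proof, iterating the observation that a node on a cycle with a non-$t_0$ strategy needs its unique predecessor to match.
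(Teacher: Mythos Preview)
Your overall strategy for both parts matches the paper's proof: build a concrete three-node cycle for (i), and for (ii) use Note~\ref{not:simple} to reduce to $s=\obar{t}$ and exhibit an improvement path to $\obar{t_0}$. Two things need fixing.

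\textbf{Part (i).} Your final example, as written, does not work: you only give node~$3$ the product set $\{t_1,t_2\}$ while nodes~$1$ and~$2$ retain $\{t_1\}$ from your initial setup. Then $\obar{t_2}$ is not even a joint strategy, and the path $3{:}t_2,\,2{:}t_2,\,1{:}t_2$ is illegal. You need \emph{every} node to have $\{t_1,t_2\}$, with $\theta(i,t_1)>\theta(i,t_2)$ for all~$i$ (and weights exceeding both thresholds so that $\obar{t_1}$ and $\obar{t_2}$ are Nash equilibria); this is exactly the paper's construction. Also note that on the cycle $1\to 2\to 3\to 1$, removing $t_1$ from node~$3$ affects node~$1$ next (its unique predecessor changed), so the cascade runs $3{:}t_2,\,1{:}t_2,\,2{:}t_2$.

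\textbf{Part (ii).} Your detour analysing ``what if node~$i$'s first move is to some $t'\neq t_0$'' is unnecessary, and the justification you give for ruling it out is wrong: the paper's convention is that after a contraction the affected node may make \emph{any} choice from its remaining set as the first step, not only a profitable one. But this does not matter, because to refute $\forall w$-inefficiency you only need to exhibit \emph{one} improvement path that fails to reach a weakly better equilibrium. So simply choose the path that begins $i{:}t_0$ and cascades clockwise to $\obar{t_0}$; you have already verified that each subsequent step is a genuine profitable deviation and that $\obar{t_0}\not>_w \obar{t}$. That is precisely the paper's argument, and your case $r\neq t$ (where $s$ remains a Nash equilibrium and the condition fails trivially) is a clean addition that the paper leaves implicit.
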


\begin{proof}
$(i)$ Consider the network shown in Figure
  \ref{fig:inef_cycle}. Suppose that $\theta(i, t_1) > \theta(i, t_2)$
  for all nodes $i= 1,2,3$ and that $s = \obar{t_1}$ is a Nash
  equilibrium. Starting from $s$, suppose we remove $t_1$ from the
  product set of node $1$.  Then there exists a finite improvement
  path where all the nodes end up adopting $t_2$, by simply having
  node $1$ adopt $t_2$ and then having the remaining nodes follow
  their best response. Since $\theta(i, t_1) > \theta(i, t_2)$, all
  nodes are strictly better off in this new Nash equilibrium,
  $\obar{t_2}$.

\begin{figure}[ht]
\centering
$
\def\objectstyle{\scriptstyle}
\def\labelstyle{\scriptstyle}
\xymatrix@W=10pt @C=15pt{
 &1 \ar[rd]^{} \ar@{}[rd]^<{\{t_1,t_2\}} \ar@{}[rd]^>{\{t_1,t_2\}}\\
3  \ar[ru]_{} \ar@{}[ru]^<{\{t_1,t_2\}} & &2 \ar[ll]^{}\\
}$
\caption{\label{fig:inef_cycle} An $\exists s$-inefficient simple cycle network}
\end{figure}
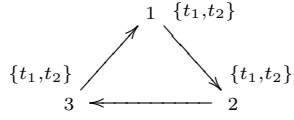

\NI 
$(ii)$ Consider a simple cycle network
$\snet=(\sgraph,\products,\prodset,\theta)$, a Nash equilibrium $s$ of
$\mathcal{G}(\snet)$, and a contraction $\snet'$ of $\snet$.  By
Note~\ref{not:simple} $s = \obar{t}$, where $t \in
\products \cup \{t_0\}$. If $s = \obar{t_0}$, then it is impossible
that by deleting any product we could make some player better off.
Suppose $s = \obar{t_1}$ for some product $t_1$. If we delete $t_1$
from some product set, say of node $1$, then there is always an
improvement path that terminates at the Nash equilibrium $\obar{t_0}$
(simply start with node $1$ adopting $t_0$, and proceed clockwise.
Then gradually every other node will switch to $t_0$ since they
eventually lose support for $t_1$). Hence no node is better off in
this new Nash equilibrium. In conclusion, there can be no Nash
equilibrium from which all improvement paths after the contraction
will make the set of nodes weakly better off.
\end{proof}

Finally, we consider the case of unsafe networks.

\begin{theorem}
\mbox{}
\begin{enumerate}[(i)]
\item There exists a simple cycle network $\snet$ that is $\exists$-unsafe.

\item Simple cycle networks are not $\forall$-unsafe (and a fortiori
  not unsafe).
\end{enumerate}
\end{theorem}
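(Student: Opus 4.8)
The plan is to treat the two parts separately. Part~(ii) falls out quickly from Note~\ref{not:simple}, while part~(i) needs an explicit small example. For part~(i) I would use the triangle $1\to 2\to 3\to 1$ with products $\products=\{t_1,t_2,t_3\}$, every edge weight equal to a fixed $w\in(0,1]$, product sets $\prodset(1)=\{t_1,t_2,t_3\}$, $\prodset(2)=\{t_1,t_3\}$, $\prodset(3)=\{t_2,t_3\}$, and all thresholds strictly below $w$ subject to the three ``rotation'' inequalities $\theta(1,t_1)<\theta(1,t_2)$, $\theta(2,t_3)<\theta(2,t_1)$, $\theta(3,t_2)<\theta(3,t_3)$ (easily met, e.g.\ with $w=1/2$ and all thresholds in $\{0.1,0.2\}$). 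Since $t_3$ is the only product lying in all three sets, $\obar{t_3}$ is a Nash equilibrium of $\mathcal{G}(\snet)$. I would take the contraction $\snet'$ removing $t_3$ from $\prodset(1)$, so $\prodset(1)$ becomes $\{t_1,t_2\}$, and then, starting at $\obar{t_3}$, let node~$1$ first re-choose $t_1$ (by the convention that an improvement path after a contraction begins with the contracted node making a fresh choice), reaching $(t_1,t_3,t_3)$, and thereafter let the path cycle forever through the six profiles $(t_1,t_3,t_3)$, $(t_1,t_1,t_3)$, $(t_1,t_1,t_2)$, $(t_2,t_1,t_2)$, $(t_2,t_3,t_2)$, $(t_2,t_3,t_3)$ and back to $(t_1,t_3,t_3)$.

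The only computation is to check that each of the six steps is a profitable deviation: at every profile in the loop the node that moves is mismatched with its unique predecessor, hence has payoff $-\theta(\cdot)<0$, and it either copies that predecessor's product, jumping to payoff $w-\theta(\cdot)>0$, or recolours itself to another still-mismatched product whose threshold is strictly smaller, which is profitable precisely by the corresponding rotation inequality; the restricted product sets are exactly what stops the moving node from instead copying its predecessor out of the cycle. This infinite sequence is a maximal, hence legitimate, improvement path in $\mathcal{G}(\snet')$, so $\snet$ is $\exists$-unsafe. The point I expect to cost real thought is recognising \emph{why} such an example must look like this: a simple cycle has no source nodes, so nothing can keep a player at a strictly positive payoff while it is mismatched, whence there is no infinite \emph{best-response} improvement path on a cycle; the loop is sustained only by the non-best-response ``lesser-of-two-evils'' moves, and the product sets must be tuned small enough to block an escape by copying yet large enough to keep both recolouring options available.

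For part~(ii), by Note~\ref{not:simple} every Nash equilibrium of $\mathcal{G}(\snet)$ is of the form $\obar{t}$ with $t\in\products\cup\{t_0\}$. Fix such an $s=\obar{t}$ and a contraction $\snet'$ deleting a product $t'$ from the set of a node $i$. If $t'\neq t$ (in particular whenever $t=t_0$), then $s$ is still a legal profile in $\mathcal{G}(\snet')$ and remains a Nash equilibrium there, since deleting a strategy cannot create a profitable deviation; so the only improvement path starting at $s$ is $s$ itself, which is finite. If $t'=t$, then $t\neq t_0$ and node~$i$ must move first; I would let it move to $t_0$, after which its clockwise successor sees a predecessor playing $t_0\neq t$, hence has payoff $-\theta(\cdot,t)<0$ while on $t$ and strictly gains by switching to $t_0$; propagating clockwise around the cycle, after $n-1$ more steps every node plays $t_0$, so the path terminates at the Nash equilibrium $\obar{t_0}$. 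Thus from every Nash equilibrium and every contraction there is a finite improvement path, so no simple cycle network is $\forall$-unsafe, and a fortiori none is unsafe. I foresee no obstacle here beyond invoking Note~\ref{not:simple} and checking that this cascade is legal.
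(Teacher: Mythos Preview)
Your argument is correct on both parts, and in each part it proceeds somewhat differently from the paper.

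For part~(i), both you and the paper build a triangle and display an explicit infinite improvement path after a contraction, but the two constructions are not the same. The paper works with only two products, taking $\prodset(1)=\prodset(2)=\{t\}$ and $\prodset(3)=\{t,t_1\}$, deleting $t$ from node~$3$, and then exhibiting a six-state loop alternating between $t$ and $t_0$. Your three-product construction with the ``rotation'' threshold inequalities $\theta(1,t_1)<\theta(1,t_2)$, $\theta(2,t_3)<\theta(2,t_1)$, $\theta(3,t_2)<\theta(3,t_3)$ is more elaborate but is fully self-contained and checks step by step: the three ``copy the predecessor'' moves turn a negative payoff into $w-\theta(\cdot)>0$, the three ``recolour to the cheaper mismatched product'' moves are profitable precisely by your inequalities, and the pruned product sets of nodes~$2$, $3$, and post-contraction node~$1$ block the escape by copying. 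What your version buys is transparency about \emph{why} the loop persists (it is driven by non-best-response moves, as you note), at the price of one extra product and a handful of threshold constraints; the paper's version is more economical in products.

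For part~(ii), the paper does not argue directly: it simply invokes an external result (Theorem~28 of~\cite{SA12b}) asserting that every simple cycle network has a finite improvement path, which immediately rules out $\forall$-unsafety. Your proof is more elementary and avoids the citation. Using Note~\ref{not:simple} you split on whether the deleted product is the one currently played: if not, $s$ is still a legal joint strategy and still a Nash equilibrium in $\mathcal{G}(\snet')$, so the trivial path suffices; if so, the clockwise cascade to $\obar{t_0}$ is a concrete finite improvement path. This is a genuine simplification over the paper's appeal to an outside theorem, and the only thing you rely on beyond Note~\ref{not:simple} is the convention (stated in Section~\ref{sec:red}) that the contracted node's forced first move counts as an improvement step regardless of payoff.
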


\begin{proof}
$(i)$ 
Consider the network shown in Figure \ref{fig:cycle}(a)
and assume that $\obar{t}$ is a Nash equilibrium.

\begin{figure}[ht]
\centering
\begin{tabular}{ccc}
$
\def\objectstyle{\scriptstyle}
\def\labelstyle{\scriptstyle}
\xymatrix@W=10pt @C=15pt{
 &1 \ar[rd]^{} \ar@{}[rd]^<{\{t\}} \ar@{}[rd]^>{\{t\}}\\
3  \ar[ru]_{} \ar@{}[ru]^<{\{t,t_1\}} & &2 \ar[ll]^{}\\
}$
&
&
$
\def\objectstyle{\scriptstyle}
\def\labelstyle{\scriptstyle}
\xymatrix@W=8pt @C=15pt @R=15pt{
(\underline{t},t, t_0)\ar@{=>}[r]& (t_0, t, \underline{t_0})\ar@{=>}[r]& (t_0, \underline{t}, t)\ar@{=>}[d]\\
(t, \underline{t_0}, t_0)\ar@{=>}[u]& (t, t_0, \underline{t})\ar@{=>}[l]& (\underline{t_0}, t_0, t)\ar@{=>}[l]\\
}$
\\
\\
(a) & & (b)\\
\end{tabular}
\caption{\label{fig:cycle}A simple cycle network and an infinite improvement path}
\end{figure}
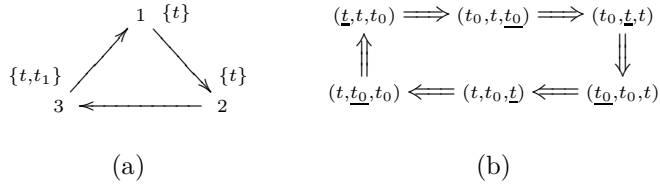 

By removing from the product set of node 3 the product $t$ we get in the resulting game 
an infinite improvement path depicted in Figure~\ref{fig:cycle}(b).
(In each joint strategy we underline the strategy that
is not a best response to the choice of other players.)
So the initial network is $\exists$-unsafe.
\II

\NI
$(ii)$
By Theorem 28 in~\cite{SA12b} for every simple cycle network $\snet$
there exists a finite improvement path in $\mathcal{G}(\snet)$. This implies
both claims.
\end{proof}


The above analysis does not carry through to all strongly
connected graphs. Indeed, we showed in particular that simple
cycle networks cannot be $\fa w$-vulnerable and also not $\forall
s$-inefficient. However, in Example~\ref{exa:faw} we exhibited a
network that is $\fa w$-vulnerable and in
Example~\ref{exa:forall-s-contraction} a network that is $\forall
s$-inefficient.  The underlying graphs of both networks are
strongly connected.

\section{Conclusions}
\label{sec:conc}

In this paper we provided a systematic study of paradoxes that can
arise in social networks with multiple products. Such paradoxes allow us to better understand possible undesirable
consequences of modifying the choices that are available to the agents
within a social network. The focus of our work was on identifying
these paradoxes and on determining their relative strength.

To analyze them, we
used a natural game-theoretic framework in the form of social network
games recently introduced in~\cite{SA12} and ~\cite{SA12b}. These
games do not always admit (pure) Nash equilibria, and as a result, more
types of paradoxes can arise (as we exhibited), than in the class of congestion games with
its celebrated Braess paradox. Out of all the notions of paradoxes that we introduced and studied, 
one question still remained open: do $\fa s$-vulnerable networks exist?

In future work we plan to assess the computational complexity of
determining the presence of these paradoxes and plan to analyze other selected
networks.  We also plan to expand our analysis of selected classes of networks,
determining which paradoxes can then be present.

Finally, in our analysis we assumed that the agents can refrain from
selecting a product.  Recently we initiated in \cite{AS13} a study of
an alternative version of social network games, in which each agent
has to choose a product.  This corresponds to natural situations, for
instance when pupils have to choose a primary school or when each
student has to select a laptop.  Such social networks are studied by
modifying our framework so that the strategy $t_0$ is not available.
This change leads to a different analysis and different results.

\bibliographystyle{abbrv}

\bibliography{/ufs/apt/bib/e}

\end{document}